\newcolumntype{L}[1]{>{\raggedright\let\newline\\\arraybackslash\hspace{0pt}}m{#1}}
\tikzstyle{arrow} = [thick,->,>=stealth]
\tikzset{
	block/.style = {draw, thick, minimum width=1.7cm, minimum height=1.2cm, node distance=3cm},
	down/.style={yshift=-7em},
	up/.style={yshift=7em},
	right/.style={xshift=15em},
	downright/.style={xshift=12em, yshift = -7em},
	downleft/.style={xshift=-12em, yshift = -7em}
}
\tikzset{
	circ/.style={circle, fill=black, inner sep=2pt, node contents={}}
}
\tikzstyle{line}=[draw] 
\DeclareMathAlphabet{\pazocal}{OMS}{zplm}{m}{n}
\theoremstyle{definition}
\newtheorem{definition}{Definition}
\DeclareMathOperator*{\argmax}{arg\,max}
\DeclareMathOperator*{\argmin}{arg\,min}
\DeclareMathOperator*{\max2}{\,max}
\newcommand\givenbase[1][]{\:#1\lvert\:}
\let\given\givenbase
\DeclarePairedDelimiterX\Basics[1](){\let\given\sgiven #1}
\newtheorem{theorem}{Theorem}[section]
\newtheorem{corollary}{Corollary}[theorem]
\newtheorem{lemma}[theorem]{Lemma}
\let\norm\undefined 
\DeclarePairedDelimiter\norm{\lVert}{\rVert}
\DeclareSymbolFont{matha}{OML}{txmi}{m}{it}
\DeclareMathSymbol{\varv}{\mathord}{matha}{118}
\title{Fast reinforcement learning for decentralized MAC optimization}
\author{Eleni Nisioti and Nikolaos Thomos~\IEEEmembership{Senior~Member,~IEEE}
\thanks{E. Nisioti and N. Thomos are with the School of Computer Science and Electronic Engineering, University of Essex, Colchester, United Kingdom (e-mail:e.nisioti, nthomos@essex.ac.uk).} 
}
\begin{document}
\maketitle
\begin{abstract}
In this paper, we propose a novel decentralized framework for optimizing the transmission strategy of Irregular Repetition Slotted ALOHA (IRSA) protocol in sensor networks. We consider a hierarchical communication framework that ensures adaptivity to changing network conditions and does not require centralized control. The proposed solution is inspired by the reinforcement learning literature, and, in particular, Q-learning. To deal with sensor nodes' limited lifetime and communication range, we allow them to decide how many packet replicas to transmit considering only their own buffer state. We show that this information is sufficient and can help avoiding packets' collisions and improving the throughput significantly. We solve the problem using the decentralized partially observable Markov Decision Process (Dec-POMDP) framework, where we allow each node to decide independently of the others how many packet replicas to transmit. We enhance the proposed Q-learning based method with the concept of \textit{virtual experience}, and we theoretically and experimentally prove that convergence time is, thus, significantly reduced. The experiments prove that our method leads to large throughput gains, in particular when network traffic is heavy, and scales well with the size of the network. To comprehend the effect of the problem's nature on the learning dynamics and vice versa, we investigate the \textit{waterfall effect}, a severe degradation in performance above a particular traffic load, typical for codes-on-graphs and prove that our algorithm learns to alleviate it.
\end{abstract}

\section{Introduction}\label{sec:intro}
%
%
%
%
%

The scenery of Internet of Things (IoT) technology is rapidly evolving, both in terms of opportunities and needs, and is expanding its outreach to a wide spectrum of daily life applications. Communication in IoT networks and wireless sensor networks (WSNs) is in general challenging, as IoT devices and sensors have limited capabilities, such as limited battery capacity and communication range. To coordinate the access of the shared wireless resources, a MAC protocol is employed. MAC design aims at optimizing the performance of communication by formulating the strategies IoT or sensor nodes use to access the common channel. Communication protocols, such as Slotted ALOHA \cite{Abramson1970THEAS}, offer efficient random access mechanisms, but face problems for networks of increased size and channels with varying noise conditions and network load. Thus, there is still an urgent need to redesign ALOHA so that it optimally uses the available bandwidth and users can obtain the demanded content with fewer transmissions  and without imposing coordination between the nodes. Such optimization of Slotted ALOHA will lead to prolonging the life of the sensors as fewer transmissions will be required for the communication.  

MAC protocol design is often studied as a distributed resource allocation problem, where sensors attempt transmission of packets to a shared channel, and therefore, compete for the restricted bandwidth resources. There exist two diametrical families of MAC protocols, namely: \begin{enumerate*}[label={(\alph*)}]
\item Time-Division Multiple Access (TDMA) based, where allocation of slots is static and performed a-priori, and \item contention-based, where nodes
randomly select time slots to transmit. \end{enumerate*} TDMA has been successfully applied in VANETS \cite{Hadded2015} due to its ability to provide deterministic access time without collisions in real-time applications. Conversely, contention-based methods are more appropriate for adaptive scenarios where resources and communication load change over time and energy consumption is limited \cite{1673243}, despite the fact that in these methods packet collisions occur because of the random packet transmission decisions made.  

Slotted ALOHA, belonging to the family of contention-based protocols, is widely used for designing random multiple access mechanisms, but suffers from low throughput due to packet collisions that lead to packet loss. Diversity Slotted ALOHA (DSA) \cite{Choudhury1983} significantly improves upon it by introducing a burst repetition rate, that allows network nodes to transmit a pre-defined number of replicas of the original messages. The introduction of the repetition rate enables Contention Resolution Slotted ALOHA (CRDSA) \cite{4155680}, that helps exploiting interference cancellation (IC) for the retrieval of collided packets. To further improve the performance of \cite{Choudhury1983,4155680}, Irregular Repetition Slotted ALOHA (IRSA), introduced in \cite{5668922}, allows for a variable number of replicas for each user. The work in \cite{5668922}  relates the process of successive interference cancellation applied to colliding users to the process of iterative belief-propagation (BP) erasure-decoding of codes-on-graphs.  The number of replicas in IRSA is decided by sampling from a probability distribution, which is designed such as to decrease packet loss. IRSA shows that diversity in the behavior of individual nodes, in the form of selecting the number of replicas, results in better overall throughput. 

Further improvements of IRSA can be found in the work of \cite{7302046}, which extends IRSA by introducing Coded Slotted ALOHA (CSA), where coding is performed between the packets available at the nodes. In \cite{6503624}, a frameless variant of CRDSA is introduced, which limits delays, as sensor nodes are not obliged to wait for the next frame to transmit their messages. Frame asynchronous Coded ALOHA \cite{7762138} combines methods in \cite{7302046} and \cite{6503624} and shows an improvement  both in achieved error floor and observed delay. Although these are interesting research directions, computational complexity introduced because of the coding procedure compared to the non-coding variants may limit their use in the sensor networks under study. For this reason, we do not explore this direction, but we leave it as a future work. However, we should note that our scheme is generic and can be extended for the coded variants of IRSA. 

IRSA performance depends on the optimization scheme used to derive the degree distribution function, i.e., the probability distribution used to decide the number of replicas. This distribution can be optimized using differential evolution, which is used to asymptotically analyze the transmission policy, i.e., the number of replicas. More recently in \cite{Toni2018}, the use of Multi-armed Bandits (MABs) was introduced, as a remedy for inaccurate asymptotic analysis in non-asymptotic settings and as an alternative to computationally expensive finite length block analysis. This work has been proposed for an IRSA variant that incorporates users' prioritization \cite{7305777}. The main drawback of this formulation is that it leads to a continuous action space, an intractability addressed through discretization, that has been proven to significantly degrade performance \cite{Waugh:2009:APE:1558109.1558119}. Another disadvantage of MABs is that their framework is not expressive enough as they are stateless. This renders MABs inappropriate for sensor networks, where operations are constrained by sensor
nodes' characteristics, such as battery level, memory size, etc., valuable information that MABs fail to incorporate in the decision-making process.

In this paper, we investigate the optimization of the transmission strategy of sensor networks following the Markov Decision Process (MDP) \cite{Bellman:2003:DP:862270} framework. In particular, in our scheme sensor nodes are capable of independently and distributively learning the optimal number of replicas to transmit in a slotted IRSA protocol. Guided by the nature of the problem under consideration, we design a distributed, model-free, off-line learning algorithm that deals with partial observability, which refers to the inability of a sensor node to observe information that requires global access to the network. Hence, under partial observability nodes act on information only local to them, for example the state of their buffer, i.e., the number of packets in it. This approach has successfully been applied in the domain of sensor networks \cite{AAAI113750} and draws from its need for scalable, efficient, decentralized optimization algorithms. To deal with partial observability we employ decentralized POMDP (Dec-POMDP) algorithms, that are associated with high complexity, as they are NEXP-Complete \cite{DBLP:journals/corr/abs-1301-3836}. Hence, to overcome this problem, we explore realistic variations of it that exploit the problem's nature, in particular independence of agents in terms of learning. Distributed optimization in sensor networks has been extensively studied in \cite{Lesser:2003:DSN:940763} and successful applications have mainly been offered in the areas of packet routing \cite{1420665} and object tracking \cite{Nair:2005:NDP:1619332.1619356}. Machine learning concepts have been explored in \cite{1673243}, where an actor-critic algorithm to optimally schedule active times in an Timeout-MAC protocol is presented and \cite{DBLP:journals/corr/abs-1710-08803}, where a multi-state sequential learning algorithm is proposed, that learns the number of existing critical messages and reallocates resources in a contention-free MAC protocol. However, none of these works addresses decentralized resource allocation under a random access MAC mechanism. Our solution leverages techniques from the multi-agent reinforcement learning literature to design transmission strategies for agents that optimally manage the available time slots and maximize packet throughput. To the best of our knowledge, this is the first attempt to formulate a decentralized and adaptive solution for MAC design in the context of Slotted ALOHA. Our main contributions consists in:
\begin{itemize}[noitemsep,,topsep=0pt]
\item the design of an intelligent sensor network that adapts to communication conditions and optimizes its behavior in terms of packet transmission, using reinforcement learning;
\item the derivation of an algorithm from the family of Dec-POMDP that employs virtual experience concepts to accelerate the learning process \cite{MnihKSGAWR13};
\item the investigation of the impact of the waterfall effect on the learning dynamics and the ability of our proposed algorithm to alleviate it. 
\end{itemize}
Section \ref{sec:framework} describes the problem under investigation, introduces the suggested framework and models the problem highlighting underlying assumptions. In Section \ref{sec:IRSA}, we provide the necessary theoretical background by outlining the vanilla IRSA protocol in order to derive the goal of optimization. In Section \ref{sec:protocol}, we formulate our proposed decentralized reinforcement learning based POMDP IRSA protocol, henceforth referred to as Dec-RL IRSA. Finally, Section \ref{sec:experiments} exhibits the experiments performed to configure and evaluate our optimization technique. 

\section{Intelligent sensor network framework} \label{sec:framework}

%
%
\subsection{Sensor network description}

Let us consider a network of $M$ sensor nodes collecting measurements from their environment and transmitting them to a core network for further process. The main bottleneck of the operation of the network is the transmission of the packets nodes possess through a common communication medium, as it is also used by neighboring sensor nodes that transmit their packets over it. Abiding to the vanilla Slotted ALOHA framework and its variants, in our work time is divided into frames of fixed duration, each one consisting of $N$ time slots. At the beginning of each frame each sensor randomly chooses one of the $N$ available slots to transmit its packet. ALOHA transmission protocol is depicted in Fig. \ref{fig:aloha}. In this paper, a contention-based approach is used, and, therefore, collisions occur due to the fact that sensors may choose to transmit simultaneously in a slot. This results in a degradation of the observed throughput.

\subsection{Proposed communication framework}\label{sec:sframework}

The design of an efficient MAC protocol requires sensor nodes to be equipped with the capability of independently deciding upon their transmission strategy. Traditional approaches solve the MAC optimization centrally, assuming that all problem-related information will become available to a central node. This introduces a communication overhead that is needed to exchange the information required to make the optimal transmission decisions. This communication is expensive for large-sized networks, and, in general, does not scale well with the size of the network. Further, centralized algorithms fail to exploit the underlying network structure, which can facilitate the optimization of transmission strategies by exhibiting characteristics such as locality of interaction. Here, we aim at designing a protocol that can be easily applied in large-sized networks, as well as to optimize its functionality in a distributed way, considering sensor nodes as the basic building block. In Fig. \ref{fig:network}, we illustrate the overall structure of our communication model. This resulted from the following desired characteristics:
\begin{figure}
\begin{minipage}[t]{0.35\textwidth}
\centering
\resizebox{!}{3cm}{
\input{sensors.tex}}
\caption*{(a)}
\end{minipage} %
\begin{minipage}[t]{0.6\textwidth}
\centering
\tikzstyle{background grid}=[draw, black!50,step=1cm]

\begin{tikzpicture}
\node[text width = 1.2cm, text height=0.5cm, minimum width = 1.2cm, align = center] (u1) at (current page.north west){\small User 1};
\draw[help lines,step=0.5cm,xstep=1cm,dashed] ([xshift=1cm, yshift=0.15cm]u1) grid ([xshift=7cm,yshift=-2.5cm]current page.north west);
\node[text width = 1.2cm, text height=0.5cm, minimum width = 1.2cm, align = center] (u2) at ([yshift=-0.5cm]u1){\small User 2};
\node[text width = 1.2cm, text height=0.5cm, minimum width = 1.2cm, align = center] (u3) at ([yshift=-0.5cm]u2){\small User 3};
\node[text width = 1.2cm, text height=0.5cm, minimum width = 1.2cm, align = center] (dots) at ([yshift=-0.5cm]u3){\small \vdots};
\node[text width = 1.2cm, text height=0.5cm, minimum width = 1.2cm, align = center] (um) at ([yshift=-0.5cm]dots){\small  User m};
\node[draw, minimum width = 0.4cm, dashed, red] (p_1) at([xshift= 1.5cm,yshift = -0.2cm]u1) {};
\node[draw, minimum width = 0.4cm] (p1) at([xshift= 2.5cm,yshift = -0.2cm]u1) {};
\node[draw, minimum width = 0.4cm] (p2) at([xshift= 4.5cm,yshift = -0.2cm]u2) {};
\node[draw, minimum width = 0.4cm, dashed, red] (p_2) at([xshift= 2.5cm,yshift = -0.2cm]u2) {};
\node[draw, minimum width = 0.4cm] (p3) at([xshift= 6.5cm,yshift = -0.2cm]u3) {};
\node[draw, minimum width = 0.4cm, dashed, red] (p_3) at([xshift= 4.5cm,yshift = -0.2cm]u3) {};
\node[draw, minimum width = 0.4cm] (p4) at([xshift= 6.5cm,yshift = -0.2cm]um) {};
\node[xshift=1.2cm] (col) at($(p3)!0.5!(p4)$) {\small collision};
\draw[arrow,shorten >=0.1cm] (col.north) to (p3.east);
\draw[arrow,shorten >=0.1cm] (col.south) to (p4.east);
\draw[latex'-latex',double] ([yshift=0.3cm, xshift = 1cm]current page.north west) -- ([xshift=7cm,yshift=0.3cm]current page.north west);
\node[text width = 6cm, align = center] (fr) at ([xshift=3.5cm, yshift=0.5cm]current page.north west) {Frame, $N$ slots};
\draw[line, dashed] (p_1) -- (p1);
\draw[line, dashed] (p_2) -- (p2);
\draw[line, dashed] (p_3) -- (p3);
\draw[line, blue, dotted, thick] (p1) -- (p_2);
\draw[line, blue, dotted, thick] (p2) -- (p_3);
\draw[line, blue, dotted, thick] (p3) -- (p4);
\end{tikzpicture}
\caption*{(b)}
\end{minipage}
\caption{(a) a sensor network with a common access medium (b) transmission under the ALOHA protocol. Packets with solid lines illustrate a collision under vanilla Slotted ALOHA. The IRSA mechanism is superimposed using packets with red, dashed outlines. A black, dashed outline represents decoding of a packet and a blue, dotted line indicates that IC is performed.}
\label{fig:aloha}
\end{figure}

\paragraph{Hierarchical structure} It has been often argued that intelligent behavior of complex systems should be pursued through the adoption of hierarchical structures that support the emergence of collective intelligence \cite{4066245}. Early in the pursuit of artificial intelligence \cite{4066245} collective intelligence was recognized as a means of achieving intelligent behavior in complex systems based on interaction in populations of agents instead of sophisticated units. Inspired by \cite{4221491}, the network is organized into clusters, based on features such as proximity, common characteristics, e.g., priority or common behavior, e.g. packet content. Each cluster in Fig. \ref{fig:network}, illustrated with a dashed ellipsis, is formed by the sensors in it, one of which is the cluster-head. The latter is responsible to collect the packets from all the sensor nodes in a cluster and then transmit them to the core network. Therefore, cluster-heads serve as intermediate nodes between the sensor nodes and the core network. This design enables scalability of the network architecture. It also presents the opportunity of forming clusters based on common characteristics that affect optimization, e.g., requests for the same content can be addressed by optimizing locally the cached content in the cluster-heads. In the rest of the paper, we do not deal with the cluster formation problem but we assume that this has already taken place. Thus, we focus on the optimization of the transmission strategies of the cluster-heads.  
\paragraph{Adaptivity} Sensor networks that employ reinforcement learning to adjust to changes of their environment have been shown to be a promising approach that can ensure real-time, optimal allocation of resources in non-stationary environments \cite{1673243}. Motivated by this, our protocol is based on Q-learning, a model-free algorithm that learns optimal policies through interaction with its environment, and, thus,  adapts to its changes.  
\paragraph{Decentralization} Here, we aim at designing a decentralized solution that exploits the ad hoc, time-varying and heterogeneous nature of the network. By removing the need for a centralized point of control, our solution leverages locality of information and interaction to create nodes that contribute to the optimal overall throughput following computationally efficient policies.
\begin{figure}[t]
\centering
\scalebox{0.4}{	\begin{tikzpicture}	
	\node[block, align = center, text width=1.8cm, text centered, minimum width = 3 cm] (CN) at (0,0) {Core Network};
    \node[block,ellipse, minimum width = 12cm, minimum height = 5.2cm ] (el1) at ( [xshift = -4cm, yshift=-4cm]CN) {};
     \node[block,ellipse, minimum width = 4.6cm, minimum height = 4.4cm, dashed ] (el3) at ( [xshift = -2.5cm, yshift=-1]el1) {};
     \node[block,ellipse, minimum width = 4.5cm, minimum height = 4.3cm, dashed ] (el4) at ( [xshift = 2.3cm, yshift=1]el1) {};
    \node[block, align = center, text width=1cm, text centered, minimum width = 1 cm, minimum height=1cm] (N1) at ([xshift=-1.5cm, yshift=-1cm]el1) {Node};
    \node[block, align = center, text width=1cm, text centered, minimum width = 1 cm, minimum height=1cm] (N2) at ([xshift=-3.5cm, yshift=-1cm]el1) {Node};
    \node[block, align = center, text width=1cm, text centered, minimum width = 1 cm, minimum height=1cm] (N3) at ([xshift=-3.5cm, yshift=1cm]el1) {Node};
    \node[block, align = center, text width=1cm, text centered, minimum width = 1 cm, minimum height=1cm] (N4) at ([xshift=-1.5cm, yshift=1cm]el1) {CH};
    \node[block, align = center, text width=1cm, text centered, minimum width = 1 cm, minimum height=1cm] (N5) at ([xshift=1.5cm, yshift=1cm]el1) {Node};
     \node[block, align = center, text width=1cm, text centered, minimum width = 1 cm, minimum height=1cm] (N6) at ([xshift=3.3cm, yshift=.7cm]el1) {Node};
    \node[block, align = center, text width=1cm, text centered, minimum width = 1 cm, minimum height=1cm] (N7) at ([xshift=1.5cm, yshift=-1cm]el1) {CH};
    \node[block,ellipse, minimum width = 10cm, minimum height = 5.2cm ] (el2) at ( [xshift = 4cm, yshift=4cm]CN) {};
    \node[block,ellipse, minimum width = 2.1cm, minimum height = 4.3cm, dashed ] (el5) at ( [xshift = -1.5cm, yshift=-1]el2) {};
     \node[block,ellipse, minimum width = 4.5cm, minimum height = 4.1cm, dashed ] (el6) at ( [xshift = 2.2cm, yshift=0.9]el2) {};
      \node[block, align = center, text width=1cm, text centered, minimum width = 1 cm, minimum height=1cm] (N8) at ([xshift=-1.5cm, yshift=-1cm]el2) {Node};
    \node[block, align = center, text width=1cm, text centered, minimum width = 1 cm, minimum height=1cm] (N9) at ([xshift=-1.5cm, yshift=1cm]el2) {CH};
    \node[block, align = center, text width=1cm, text centered, minimum width = 1 cm, minimum height=1cm] (N10) at ([xshift=1.5cm, yshift=1cm]el2) {CH};
    \node[block, align = center, text width=1cm, text centered, minimum width = 1 cm, minimum height=1cm] (N11) at ([xshift=1.5cm, yshift=-1cm]el2) {Node};
     \node[block, align = center, text width=1cm, text centered, minimum width = 1 cm, minimum height=1cm] (N12) at ([xshift=3.4cm, yshift=-0.5cm]el2) {Node};
     \draw[shorten >=0.1cm,->, thick,-{Latex[length=2mm]}] (N4) -- node [text width =2cm,near end,yshift=0.5cm] {Packets}  (CN.west);
     \draw[shorten >=0.1cm,->, thick,-{Latex[length=2mm]}] (N7) -- node [text width =2cm,near end,xshift=1.5cm,yshift=0.2cm] {Packets}  (CN); 
     \draw[shorten >=0.1cm,->, thick,-{Latex[length=2mm]}] (N9) -- node [text width =2cm,near end,xshift=-0.5cm] {Packets}  (CN);
     \draw[shorten >=0.1cm,->, thick,-{Latex[length=2mm]}] (N10) -- node [text width =2cm,near end,xshift=1cm,yshift=-0.5cm] {Packets}  (CN.east); 
     \draw[shorten >=0.1cm,shorten <=0.1cm,->,dashed,-{Latex[length=2mm]}] (N1) -- node [text width =2cm,midway, right, xshift=-4cm] {}  (N4);
     \draw[shorten >=0.1cm,shorten <=0.1cm,->, dashed,-{Latex[length=2mm]}] (N4) -- node [text width =2cm,midway, right, xshift=-4cm] {}  (N1);
      \draw[shorten >=0.1cm,shorten <=0.1cm,->,dashed,-{Latex[length=2mm]}] (N2) -- node [text width =2cm,midway, right, xshift=-4cm] {}  (N4);
       \draw[shorten >=0.1cm,shorten <=0.1cm,->,dashed,-{Latex[length=2mm]}] (N4) -- node [text width =2cm,midway, right, xshift=-4cm] {}  (N2);
       \draw[shorten >=0.1cm,shorten <=0.1cm,->,dashed,-{Latex[length=2mm]}] (N3) -- node [text width =2cm,midway, right, xshift=-4cm] {}  (N4);
 \draw[shorten >=0.1cm,shorten <=0.1cm,->,dashed,-{Latex[length=2mm]}] (N4) -- node [text width =2cm,midway, right, xshift=-4cm] {}  (N3);
      \draw[shorten >=0.1cm,shorten <=0.1cm,->,dashed,-{Latex[length=2mm]}] (N5) -- node [text width =2cm,midway, right, xshift=-4cm] {}  (N7);
     \draw[shorten >=0.1cm,shorten <=0.1cm,->,dashed,-{Latex[length=2mm]}] (N7) -- node [text width =2cm,midway, right, xshift=-4cm] {}  (N5);
     \draw[shorten >=0.1cm,shorten <=0.1cm,->,dashed,-{Latex[length=2mm]}] (N7) -- node [text width =2cm,midway, right, xshift=-4cm] {}  (N6);
     \draw[shorten >=0.1cm,shorten <=0.1cm,->, dashed,-{Latex[length=2mm]}] (N6) -- node [text width =2cm,midway, right, xshift=-4cm] {}  (N7);
      \draw[shorten >=0.1cm,shorten <=0.1cm,->,dashed,-{Latex[length=2mm]}] (N8) -- node [text width =2cm,midway, right, xshift=-4cm] {}  (N9);
      \draw[shorten >=0.1cm,shorten <=0.1cm,->,dashed,-{Latex[length=2mm]}] (N9) -- node [text width =2cm,midway, right, xshift=-4cm] {}  (N8);
      \draw[shorten >=0.1cm,shorten <=0.1cm,->,dashed,-{Latex[length=2mm]}] (N11) -- node [text width =2cm,midway, right, xshift=-4cm] {}  (N10);
    \draw[shorten >=0.1cm,shorten <=0.1cm,->,dashed,-{Latex[length=2mm]}] (N10) -- node [text width =2cm,midway, right, xshift=-4cm] {}  (N11);
      \draw[shorten >=0.1cm,shorten <=0.1cm,->,dashed,-{Latex[length=2mm]}] (N12) -- node [text width =2cm,midway, right, xshift=-4cm] {}  (N10);
      \draw[shorten >=0.1cm,shorten <=0.1cm,->,dashed,-{Latex[length=2mm]}] (N10) -- node [text width =2cm,midway, right, xshift=-4cm] {}  (N12);
       
    \node[text width=4cm] (t1) at ([xshift = 8cm, yshift = -2cm]el1) 
    {Sensor Network 1};
    \node[text width=4cm] (t2) at ([xshift = 7cm, yshift = -2cm]el2) 
    {Sensor Network 2};
     \node[text width=3cm] (t3) at ([xshift = -2cm, yshift = 3cm]el3) 
    {Cluster 1};
     \node[text width=3cm] (t4) at ([xshift = 5cm, yshift = 2cm]el4) 
    {Cluster 2};
    \node[text width=3cm] (t5) at ([xshift = -2cm, yshift = 3cm]el5) 
    {Cluster 1};
     \node[text width=3cm] (t6) at ([xshift = 5cm, yshift = 2cm]el6) 
    {Cluster 2};
     \draw[shorten >=0.1cm,->,dashed] (t1) -- (el1);
     \draw[shorten >=0.1cm,->,dashed] (t2) -- (el2);
     \draw[shorten >=0.1cm,->,dashed] (t3) -- (el3);
     \draw[shorten >=0.1cm,->,dashed] (t4) -- (el4);
     \draw[shorten >=0.1cm,->,dashed] (t5) -- (el5);
     \draw[shorten >=0.1cm,->,dashed] (t6) -- (el6);     
	\end{tikzpicture}}
\caption{Intelligent sensor network: Sub-networks are organized into clusters with a cluster-head (CH). Bidirectional arrows indicate exchange of information for cluster formation and sharing of each node's packets with the cluster-head, while solid lines show communication of cluster heads with the core network.}
\label{fig:network}
\end{figure}
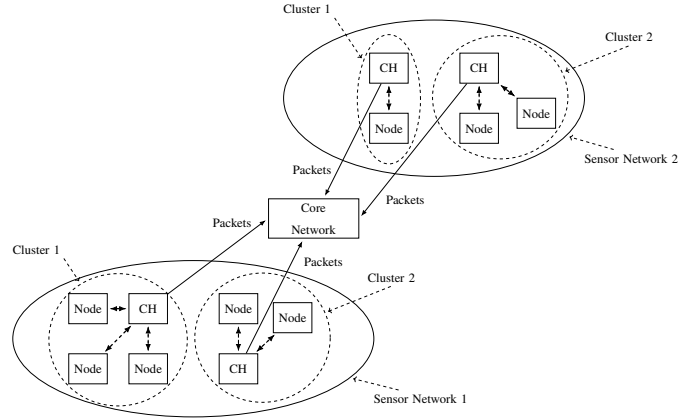

\subsection{Preliminaries}\label{sec:prelim}

This section will present our formulation and assumptions made regarding the physical layer, sensor nodes' buffers' models and packets' arrival process. Tables \ref{table:system} and \ref{table:node} summarize the notation used regarding system-related and node-related variables, respectively.

\subsubsection{Physical layer}\label{sec:physical}

We consider frequency, non-selective channels, which are characterized at the beginning of each time frame by the traffic $G_t$, where $t$ is the time index indicating the beginning of a frame. A set of slots form a frame. We assume that the traffic can be estimated perfectly in light of the number of nodes $M$ and size of frame $N$ and that it remains constant during a frame, similar to the work in \cite{7302046, 7247494}, where $G$ also stays constant for all frames. Although traffic imposes some central knowledge about the network condition, it can be easily derived by the cluster-head if we assume group-wise observability, as in the work of \cite{AAAI113750}, which denotes the ability of an agent to fully determine global information based only on observations of its cluster. 

\begin{table}
\begin{minipage}[t]{0.5\textwidth}
\centering
\caption{System-related variables}
\begin{tabular}{ |c|c| } 
 \hline
 Notation & System-related \\
 \hline
$M$ & number of sensor nodes  \\ 
$N$ & number of slots in frame \\
$G$ & channel load \\
$T$ & packet throughput \\
$K$ & total number of transmitted packets in frame\\
$PLR$ & probability loss rate\\
$F$ & size of packet \\
$S_u$& uncontrolled state\\
$E$ & number of episodes\\
$L_E$ & number of iterations in episode\\
$\alpha$ & learning rate\\
$\gamma$ & discount factor\\
$w$ & history window \\
$I$ & initial state distribution \\
$L$ & coverage time \\
$\phi$ & exponent of learning rate\\
$E$ & threshold for e-greedy exploration\\
$\pazocal{T}$ & virtual experience transformation\\
 \hline
\end{tabular}
\label{table:system}
\end{minipage} %
\begin{minipage}[t]{0.5\textwidth}
\centering
\caption{Node-related variables}
\begin{tabular}{|c|c|}
\hline
Notation & Node-related \\
\hline
$C_t$ & condition \\
$l$ & number of replicas to transmit \\
$F_t$ & number of arrivals in node's buffer \\
$B$ & size of node's buffer \\
$d$ & maximum number of replicas \\
$b_t$ & current state of buffer \\
$\Lambda(x)$ & node-degree probability distribution\\
$\pazocal{S}$ & space of states \\
$\pazocal{A}$ & space of actions\\
${\Omega}$ & space of observations\\
$\widetilde{\pazocal{H}}$ & space of virtual experience\\
$\pazocal{H}$ & space of histories\\
$R_t$ & immediate reward\\
$\rho_t$ & expected reward\\
$\pi(s)$ & policy \\
$Q^{\pi}(s,a)$ & state-action value function under policy $\pi$ \\
$V^{\pi}(s)$ & state value function under policy $\pi$ \\
\hline
\end{tabular}
\label{table:node}
\end{minipage}
\end{table}

Note that the proposed framework is oblivious to the underlying modulation and coding schemes. Similar to the work in \cite{5986747}, our only assumption is that the packet throughput and number of transmitted packets can be expressed as
\begin{align}
T_t &= T(G_{t-1}, K_{t-1}, PLR_{t-1}) \label{eq:throughput}\\
K_t &= K(G_{t}, T_{t}, PLR_{t}, C_{t})
\label{eq:action}
\end{align}
where $K_t$ represents the number of packets transmitted by all nodes during the particular frame, $PLR_t$ is the packet loss rate and $C_t$ is a node's condition. Nodes' condition can in general depend on the buffer state, battery level and, in general, any information that should potentially affect their behavior. Our work considers only the buffer state of the network nodes, as incorporating more variables in $C_t$ will increase computational requirements. However, our framework is general, and, depending on the application of interest, can easily incorporate additional characteristics to $C_t$. 

There are three sources of packet loss, i.e., packet collisions, imperfect interference cancellation and bit-level channel noise, that depends on nodes' transmission and noise power. In the rest of our work, and without loss of generality, we will assume that interference cancellation is perfect and that noise power is zero. Successful transmission will therefore be guaranteed if the iterative BP erasure-decoding algorithm, used for SIC, succeeds to recover the original packets. Thus, in our approach, (\ref{eq:throughput}) and (\ref{eq:action}) will be oblivious to $PLR_{t}$.

As suggested by (\ref{eq:throughput}), we assume that the channel throughput can be calculated in terms of the channel state, the number of transmitted packets and the packet loss probability. From (\ref{eq:throughput}), we can see that $T_t$ is a non-deterministic function of its arguments, as nodes randomly select the slots to transmit.  

\subsubsection{Buffer and traffic model}

We assume that the transmission buffer of a node is modeled as a first-in first-out queue. The source injects $F_t$ packets of size $F$ bits into the transmission buffer in each time frame according to an independent and identically distributed (i.i.d) distribution $p^f(f)$. The packets arriving to a node are stored in a finite-length buffer, of capacity $B$. Therefore, the buffer state $b^i \in \pazocal{B} = \{ 0,1, ..., B \}$ of a sensor node $i$ evolves, recursively, as follows:
\begin{align}
\begin{split}
b_0^i &= b_{init}^i\\
b_{t+1}^i &= \min\{b_t^i - T_t^i(PLR_t, K_t, G_t) + F_t^i, B^i\}
\end{split}
\end{align}
where $b_{init}^i$ denotes the initial buffer state and $T_t^i(PLR_t, K_t, G_t)$ is the packet goodput, representing the number of successfully transmitted packets in a frame for node $i$. The packets arriving after the beginning of frame $t$ cannot be transmitted until frame $t + 1$ and unsuccessfully transmitted packets stay in the transmission buffer for later retransmission.

In the following sections, we will formulate MAC optimization as a multi-agent problem and propose an efficient reinforcement learning based algorithm that enables sensor nodes to maximize the overall packet throughput of the network.

\section{IRSA overview}\label{sec:IRSA} 

%
%
In this section, we briefly overview the IRSA protocol \cite{5668922}. IRSA has been proposed to deal with the case where $M$ nodes attempt to transmit their packets into a number of transmission slots over the same communication channel. We assume that there are $N$ time slots per frame. The channel is fully characterized by its normalized traffic, defined as $G= M/N$, which represents the average number of attempted packet transmissions by all nodes per time slot. The objective of IRSA is to optimize the normalized throughput $T$, defined as the probability of successful packet transmission per slot. At the beginning of each time frame a user attempts transmission of a message by randomly choosing one of the $N$ slots to transmit a packet. In a vanilla Slotted ALOHA protocol, a transmission is successful only if no other user transmits in the same slot. The resulting throughput is a function of the normalized traffic, in particular it is $T(G) = Ge^{-G}$. In an IRSA protocol, however, a user has the capability of transmitting a variable number of replicas of the original message  in the available time slots, a strategy that improves throughput due to interference cancellation. The throughput in this case is governed by the degree distribution, a polynomial probability distribution describing the probability $\Lambda_l$ that each user transmits $l$ replicas of its message at a particular time frame. This probability distribution is expressed as 
\begin{equation}
\label{eq:distribution}
\Lambda(x) \triangleq \sum_{l=1}^{d} \Lambda_l x^l 
\end{equation}
where $d$ is the maximum number of replicas a sensor node is allowed to send. The objective of a MAC optimization algorithm is to select the values $ \Lambda_l$  in (\ref{eq:distribution}) so that overall network throughput is maximized. Formally, the optimization objective can be cast as
\begin{equation}
\label{eq:opt}
\begin{aligned}
& \text{Find:}
& &(\Lambda^{*}): \argmax_{\Lambda} T(\Lambda)\\
& \text{subject to}
& & \sum_{i=1}^{d} \Lambda_i(x) =1.
\end{aligned}
\end{equation}
The optimization in (\ref{eq:opt}) can be performed using any linear programming or gradient-based optimization algorithm, but differential evolution is usually performed \cite{5668922,7305777}. In asymptotic settings ($M \rightarrow \infty$) iterative IC convergence analysis can be used to formulate how collision resolution probability evolves with decoding time \cite{5668922} and a stability condition can be formed, which defines the maximum channel load, $G^*$, for which the probability of unsuccessful transmissions is negligible. Section \ref{sec:protocol} presents the proposed approach that allows users to learn their transmission strategies in a distributed manner for non-asymptotic scenarios. 

 \section{Dec-RL MAC protocol design} \label{sec:protocol}
 
%
%
%
%
The discussion will proceed with the adoption of the MDP model for the design of an efficient MAC optimization strategy abiding to the framework defined in Section \ref{sec:framework}. Our method employs ideas and tools from reinforcement learning and DCOP to satisfy the desired traits of the considered network setting.

\subsection{MDP formulation} Recall from (\ref{eq:throughput}) and (\ref{eq:action}) that there are two parameters affecting the state of the environment: i.e., the current channel load G and a node's condition $C$. We first assume that the sensor network is a single agent that interacts with its environment, which includes the channel and itself. This concept is depicted in Fig. \ref{fig:mdp}. We model the problem as an MDP and define the state as
\begin{align}
\begin{split}
S = \times_{1 \leq i\leq m} S_i \times S_u 
\end{split}
\end{align}
where $S \in \pazocal{S}$ is the state of the agent, $\pazocal{S}$ is the set of all states, $S_i$ represents the state of sensor node $i$ and $S_u$ stands for the part of the environment that is uncontrolled by the sensor nodes and corresponds to $G$ in our formulation.  

The transition probabilities of the defined MDP can be formulated as
\begin{align}
P(S_u^{\prime} \given S_u, \times_{1 \leq i\leq n} S_i, K) &= P(S_u^{\prime} \given S_u) \label{eq:uncontrolled}\\
P(S_i^{\prime} \given S_u, \times_{1 \leq i\leq n} S_i, K, F_i) & \propto T_i(K, G) + F_i
\label{eq:state}
\end{align}
where $T_i(\cdot)$ is the individual packet goodput of sensor node $i$ that depends on the current values $K_t$ and $G_t$. Note that we dropped time index $t$ for simplicity of notation.

From (\ref{eq:uncontrolled}), we can see that the transitions of the uncontrollable state $S_u$ are independent of the transmission strategy and the states of individual nodes. Please note here that we assume that the channel probabilistically and stochastically switches states based on the arrival and departure of sensor nodes in the network, changes to noise conditions, etc. Further, from (\ref{eq:state}), we observe that individual transitions of sensor nodes depend on the states and actions of other nodes, channel load, noise conditions and packet throughput. Therefore, transition independence for sensor nodes does not hold. 

The action of the agent $A \in \pazocal{A}$, with $\pazocal{A}$ being the action space,  consists in the joint actions of all the sensor nodes in the network. These actions represent the values of the coefficients $\Lambda_l$ of the probability distribution function in (\ref{eq:distribution}), that is
\begin{equation}
A = A_i \times \cdots \times A_m, \qquad \text{with} \quad A_i = \{\Lambda_{1,i}, \cdots, \Lambda_{d,i}\}
\end{equation}
Recall that $d$ is the maximum number of replicas a sensor node is allowed to send.
\begin{figure}
\centering
\begin{tikzpicture}
\node[draw,rectangle, minimum width =0.8\textwidth, minimum height = 4cm] (env) at (current page.center) {} ;
\node[below right] at (env.north west) {\small Environment};
\node[draw,rectangle, minimum width =0.7\textwidth, minimum height = 2cm] (network) at ([yshift=-0.5cm, xshift=-0.5cm]env) {}; 
\node[below right] at (network.north west) {\small Sensor network};
\node[draw, circle, inner sep=0pt, minimum size =0.7cm, above right] (S1) at ([xshift = 0.7cm, yshift=0.5cm] network.south west) {$s_1$};
\node[draw, circle, inner sep=0pt, minimum size =0.7cm, above ] (S2) at ([yshift = 0.2cm] network.south) {$s_2$};
\node[draw, circle, inner sep=0pt, minimum size =0.7cm, above left] (S3) at ([xshift = -0.7cm, yshift=0.5cm] network.south east) {$s_3$};
\node[above,rectangle, minimum width = 0.1\textwidth, minimum height=0.5cm] (A) at ([yshift=0.5cm]network.north) {a};
\node[draw,rectangle, minimum width=2cm] (inv) at ([xshift=3cm]A) {Medium};
\draw[->] (S1) -- node [text width=1cm, near start, draw=none,rectangle, yshift=0.1cm ] {\small $a_1$} (A);
\draw[->] (S2) -- node [text width=1cm,near start, draw=none,rectangle ] {\small $a_2$} (A);
\draw[->] (S3) -- node [text width=1cm,near start, draw=none,rectangle,above,xshift=0.4cm,yshift=-0.1cm ] {\small $a_3$} (A);
\draw[->] (A) -- (inv);
\draw[->] (inv.east) -| ([xshift=0.2cm]network.north east) to ([xshift=0.2cm]network.east) node [text width=1cm,midway,xshift=0.7cm, draw=none,rectangle] {\small $r,s^{\prime}$} to (network.east);
\end{tikzpicture}
\caption{Intelligent sensor network as an MDP: three sensors ($s_1,s_2,s_3$) transmit their chosen number of replicas $a_1,a_2,a_3$ to the common medium, which responds with a common reward $r$ and the MDP's next state $s^{\prime}$. }
\label{fig:mdp}
\end{figure}
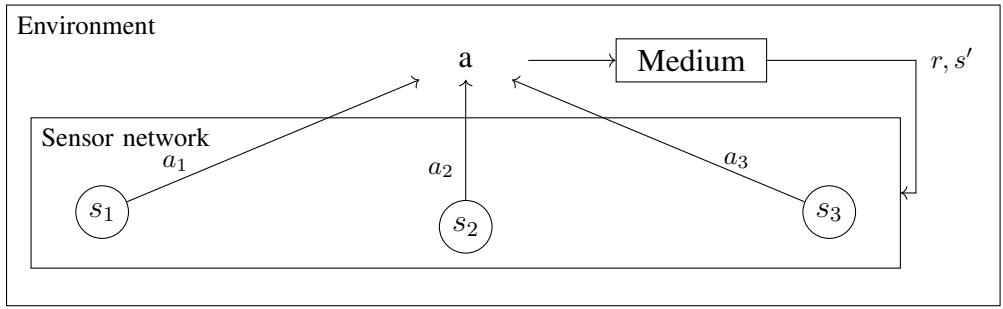
The above MDP formulation, although genuinely modeling the MAC optimization problem, leads to a continuous action space, that scales exponentially with the number of sensor nodes. This renders learning of the optimal action infeasible for large-sized problems. To circumvent this drawback, we redefine the actions as the number of replicas to send. Therefore
\begin{equation}
A = A_i \times \cdots \times A_m, \qquad \text{with} \quad A_i = l, \quad l \in \{1,\cdots, d\}
\end{equation}
During the learning phase the agent finds a deterministic policy $\pi(a|s)$, with $s \in \pazocal{S}$ and $a \in \pazocal{A}$, by choosing the optimal $A_i$ for each sensor node (except for exploratory moves in the learning process where a random action is preferred). After learning has completed, the probability distribution $\Lambda(x)$ is computed using the information of visited state-action pairs. Therefore, upon implementation of our protocol the policy is probabilistic with $\pi(a|s)=\Lambda_a$, where $\Lambda_a$ is the appropriate coefficient in $\Lambda(x)$. This technique allows us to leverage the benefits of maintaining a small action space, while using a stochastic policy. The latter is important in multi-agent scenarios, where existence of an optimal deterministic policy is not guaranteed due to an agent's uncertainty regarding the behavior of other agents \cite{Littman:1994:MGF:3091574.3091594}. 

The choice of the reward function is guided by our aim to design active, self-interested agents attempting to improve overall packet throughput, while lacking access to a global performance measure, i.e., the channel load. We define the immediate reward of an agent as
\begin{equation}
 R^i_t(s, a) = \begin{cases}
       b^i_{t-1} - b^i_{t}, &\text{if} \quad B \rightarrow \infty \\
       -b^i_t , & \text{otherwise } 
       \end{cases}
\end{equation}
where $b^i_t$ is the number of messages in the buffer of sensor node $i$ at time $t$. This reward makes the nodes eager to transmit when their buffers are full, instead of making the decisions purely based on the outcome of the current transmission.

The formulated MDP is episodic with $E$ episodes and $L_E$ learning iterations per episode. At the beginning of an episode each agent can be in a random state $s \in \pazocal{S}$. Experience, in the form of the Q-table and visits to state-action pairs, carries over episodes. Solving the formulated MDP requires finding the optimal transmission policy $\pi$ which is the one that maximized the expected discounted reward starting in state $s$ and then following policy $\pi$. The reward takes into consideration immediate and delayed rewards, and is represented as
\begin{equation}
V_{\pi}(s) = E _{\pi} [\rho_t | S_t = s] = E_{\pi} \Bigg[ \sum_{k=0}^{\infty} \gamma^t R_{t+k+1} | S_t =s\Bigg]
\label{eq:V}
\end{equation}
where $\rho_t$ is the expected return, $R_t$ is the immediate return  and $ 0 \leq \gamma < 1 $ is the discount factor that evaluates the effect of future rewards in the current state (a value of $\gamma$ closer to zero means that the agent is myopic, while when $\gamma$ is close to 1 the agent is farsighted). Equation (\ref{eq:V}) can be rewritten as a Bellman equation \cite{Bellman:2003:DP:862270} 
 \begin{equation}
 V_{\pi} (s) = \sum_a \pi(a|s) \sum_{s^{\prime},r} p(s^{\prime},r|s,a) \big[r+\gamma V_{\pi}(s^{\prime})\big])
 \end{equation}
The main drawback of MDPs is that in many practical scenarios, as in our case, the transition probability $p(s^{\prime},r|s,a)$ and the reward function that generates the reward $R$ are unknown, which makes hard to evaluate policy $\pi$. To this aim, we adopt Q-learning \cite{Watkins1992} that allows to learn from delayed rewards and determine the optimal policy, in absence of the transition probability and reward function. In Q-learning, policies and the value function are represented by a two-dimensional lookup table indexed by state-action pairs $(s,a)$. Formally, for each state $s$ and action $a$, the $Q(s,a)$ value under policy $\pi$, represents the expected discounted reward starting from $s$, taking the action $a$, and thereafter following policy $\pi$. $Q(s,a)$ is defined as follows

\begin{equation}
Q_{\pi}(s,a) = E_{\pi} \Big[ \sum_{k=0}^{\infty} \gamma^t R_{t+k+1} | S_t =s, A_t = a\Big]
\end{equation}

We define the optimal policy as the one that maximizes the expected reward for all states
\begin{equation}
\pi^*(s) = \argmax_{a\in \pazocal{A}} \big( Q^*(s,a)\big) \qquad \text{with} \quad s \in \pazocal{S}
\end{equation}

Bellman's optimality equation for $Q$ allows to define $Q^*$ independently from any specific policy
\begin{align}
Q^*(s,a) &= E\big[R_{t+1} + \gamma \max2_{a^{\prime}} Q^*(S_{t+1}, a^{\prime}) | S_t=s, A_t=a\big] \\
&=\sum_{s^{\prime},r} p(s^{\prime},r|s,a) \big[r+\gamma \max2_{a^{\prime}} Q^*(s^{\prime},a^{\prime})\big])
\end{align}

Using the Q-learning algorithm, a learned action value function Q directly approximates $Q^*$ through value iteration. Correspondingly, the Q-value iterative formula is given by
\begin{equation}
Q(S_t, A_t) = (1-\alpha) Q(S_t,A_t) + \alpha[R_{t} + \gamma \max2_a Q(S_{t+1}, a)]
 \label{eq:qlearn}
\end{equation}     
where $\alpha$ is the learning rate, which determines to what extent newly acquired information overrides old information. The above solution is guaranteed to converge to the optimal solution under the Robbins-Monro conditions: 
\begin{align}
\begin{split}
\sum_t {\alpha_t(s,a)} = \infty \quad \text{and} \quad \sum_t {\alpha_t^2(s,a)} < \infty \qquad
\forall (s,a) \in \pazocal{S} \times \pazocal{A}
\end{split}
\end{align}

As noted earlier, a state consists of all the information necessary for the network to choose the optimal action $a \in \pazocal{A}$. This necessity urges us to encompass in a state $s$ information about  nodes' condition and the uncontrolled state, which includes battery level, buffer size, number of packets to transmit, the channel's noise, load, etc. Clearly, this information cannot be available as it would impose huge communication load, while a MAC protocol should prevent channel congestion and be unintrusive. To alleviate this drawback of MDPs and Q-learning, in the next section we present a novel framework, based on partially-observable MDPs, which has been successfully used in solving problems in resource optimization problems in sensor networks \cite{4480049}.  

\subsection{Dealing with partial observability} 

POMDPs \cite{Kaelbling:1998:PAP:1643275.1643301} acknowledge the inability of an MDP to observe its state, which they remedy by introducing the notion of \textit{observations}. Observations contain information that is relevant but insufficient to describe the actual state on their own. In our case, the network and the sensor nodes cannot observe $S_u = G$, as this requires global knowledge of the environment, which is hard to achieve. We, therefore, constrain observability to information only locally available to the sensor nodes. Following our description in Section \ref{sec:prelim} regarding a sensor node condition $C$, we assume that the only state-related information a node has access to is the number of messages stored in the buffer of each sensor node, that is
\begin{equation}
\label{eq:observations}
\Omega(s) = \Omega_1 \times \cdots \times \Omega_M, \qquad  \text{with} \quad  \Omega_i = b_i
\end{equation}

POMDPs can be optimally solved using the framework of Belief MDPs \cite{Kaelbling:1998:PAP:1643275.1643301}, but this renders learning intractable, as it is performed in continuous state spaces. We adopt a fixed horizon of observations, which is a common approach that, however, has no convergence guarantees. Nevertheless, it has been successfully employed in object tracking problems due to its simplicity and expressive power \cite{AAAI113750}. 

Through the adoption of a fixed history window $w$, the observation tuple of each sensor node is defined as
\begin{equation}
H_t = \{\Omega_{t-w+1}, \cdots, \Omega_{t-1}, \Omega_{t}\}
\label{eq:history}
\end{equation}
and $\Omega$ defined as in (\ref{eq:observations}).

The Belief MDP, whose states correspond to the beliefs over states, is assumed to satisfy the Markov property. Histories of observations serve as an approximation to beliefs, therefore Q-learning can be applied as in the general MDP case.

The distributed nature of the problem has so far been purposely neglected in order to focus on the decision process formulation. Following the observations made in Section \ref{sec:sframework} regarding the need for decentralization, next we proceed by formulating a distributed representation of the problem under the Dec-POMDP framework, introduced in \cite{DBLP:journals/corr/abs-1301-3836}.

\subsection{Dec-POMDP Formulation }\label{sec:dec-pomdp}

Decentralized Partially-observable MDPs offer a powerful framework for designing solutions that take into account partial observability and are controlled in a distributive way. Here, the aim is to design a model-free solution that can help achieve improved overall throughput. The state of the environment includes information about the number of agents and number of slots per time frame, both expressed through $G$. Recall that each agent can only observe its own buffer and thus deduce if its transmission was successful. Fig. \ref{fig:dec} depicts the sensor network as a Dec-POMP.

\theoremstyle{definition}
\begin{definition}{Dec-POMDP}\label{def:decpomdp}
A decentralized partially observable Markov decision process is defined as a tuple $\langle \pazocal{M},\pazocal{S},\pazocal{A},T,R,\Omega,O,w,I\rangle$, where
\begin{itemize}[noitemsep,topsep=0cm]
\item[$\pazocal{M}$] is the set of agents
\item[$\pazocal{S}$] is a finite set of states s in which the environment can be
\item[$\pazocal{A}$] is the finite set of joint actions
\item[$T$] is the transition probability function
\item[$R$] is the immediate reward function
\item[$\Omega$] is the finite set of joint observations
\item[$O$] is the observation probability function
\item[$w$] is the history window
\item[$I$] is the initial state distribution at time t = 0
\end{itemize}
\end{definition}

Definition \ref{def:decpomdp} extends the single-agent POMDP model by considering joint actions and observations. In our case $ A_i \in \{1,2, \cdots, d\}$ , $ \Omega_i \in \{0,1,...,B\}$ and $R_i = - b_i$ is the individual reward agent $i$ observes. As regards the initial distribution $I$, we assume a uniform distribution taking values in the range $[0,1, \cdots, B]$. Note that our algorithm does not need an external, i.e. provided by the environment, common reward function $R$, but agents individually measure their rewards based on their observations.

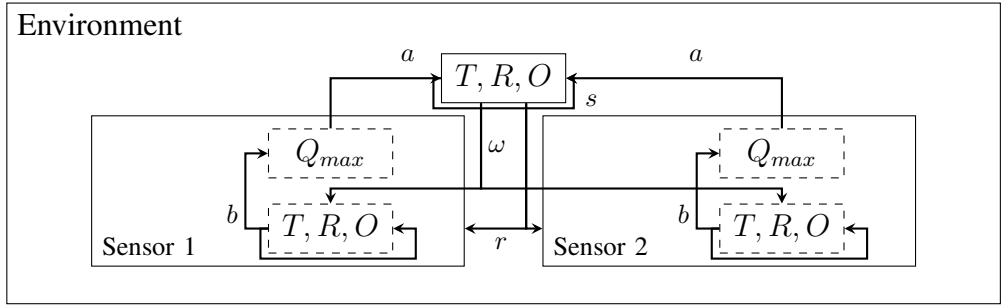
\begin{figure}
\centering
\begin{tikzpicture}
\node[draw,rectangle, minimum width =0.8\textwidth, minimum height = 4cm] (env) at (current page.center) {} ;
\node[below right] at (env.north west) {Environment};
\node[draw,rectangle, minimum width =0.3\textwidth, minimum height = 2cm] (network1) at ([yshift=-0.5cm, xshift=-3cm]env) {}; 
\node[above right] at (network1.south west) {\small Sensor 1};
\node[draw,rectangle, minimum width =0.3\textwidth, minimum height = 2cm] (network2) at ([yshift=-0.5cm, xshift=3cm]env) {}; 
\node[above right] at (network2.south west) {\small Sensor 2};
\node[draw,rectangle,dashed, minimum width =0.1\textwidth, minimum height = 0.5cm] (Q1) at ([yshift=0.5cm, xshift=0.7cm]network1) {$Q_{max}$};
\node[draw,rectangle, dashed,minimum width =0.1\textwidth, minimum height = 0.5cm] (TRO1) at ([yshift=-0.5cm, xshift=0.7cm]network1) {$T,R,O$};
\node[draw,rectangle,dashed, minimum width =0.1\textwidth, minimum height = 0.5cm] (Q2) at ([yshift=0.5cm, xshift=0.7cm]network2) {$Q_{max}$};
\node[draw,rectangle, dashed,minimum width =0.1\textwidth, minimum height = 0.5cm] (TRO2) at ([yshift=-0.5cm, xshift=0.7cm]network2) {$T,R,O$};
\draw[arrow,near start] (TRO1.west) -| node [text width=1cm,  draw=none,rectangle,yshift=0.2cm  ] {\small $b$} ([shift={(-1mm,-4mm)}]TRO1.west)-- ([shift={(3mm,-4mm)}]TRO1.east)|-  (TRO1.east);
\draw[arrow,near start] (TRO2.west) -| node [text width=1cm, draw=none,rectangle,yshift=0.2cm ] {\small $b$} ([shift={(-1mm,-4mm)}]TRO2.west)-- ([shift={(3mm,-4mm)}]TRO2.east)|-  (TRO2.east);
\draw[arrow] (TRO1.west) -|([shift={(-3mm,0mm)}]TRO1.west)-- ([shift={(-3mm,0mm)}]TRO1.west)|- (Q1.west);
\draw[arrow] (TRO2.west) -|([shift={(-3mm,0mm)}]TRO2.west)-- ([shift={(-3mm,0mm)}]TRO2.west)|- (Q2.west);
\node[draw,rectangle, minimum width =0.1\textwidth, minimum height = 0.5cm] (TRO3) at ([yshift=1cm]current page.center) {$T,R,O$};
\draw[arrow] (TRO3.west) -|([shift={(-1mm,-4mm)}]TRO3.west)-- ([shift={(1mm,-4mm)}]TRO3.east)|-  node [text width=1cm,near end, xshift=7mm, draw=none,rectangle,yshift=-0.3cm ] {\small $s$}(TRO3.east);
\draw[arrow] (Q1.north) |-  node [text width=1cm,near end, xshift=7mm, draw=none,rectangle,yshift=0.3cm ] {\small $a$} (TRO3.west);
\draw[arrow] (Q2.north) |-  node [text width=1cm,near end, xshift=7mm, draw=none,rectangle,yshift=0.3cm ] {\small $a$} (TRO3.east);
\draw[arrow] ([xshift=-0.3cm]TRO3.south) |- node [text width=1cm,near start, xshift=6mm, draw=none,rectangle ] {\small $\omega$} ([shift={(1cm,2mm)}]TRO1.north)-- ([shift={(0mm,2mm)}]TRO1.north) --  (TRO1.north);
\draw[arrow] ([xshift=-0.3cm]TRO3.south) |-([shift={(-1cm,2mm)}]TRO2.north)-- ([shift={(0mm,2mm)}]TRO2.north) -- (TRO2.north);
\draw[arrow] ([xshift=0.3cm]TRO3.south) |-  ([shift={(-0.2cm,-5mm)}]network2.west)--  ([yshift=-5mm]network2.west);
\draw[arrow] ([xshift=0.3cm]TRO3.south) |- ([shift={(0.2cm,-5mm)}]network1.east) -- node [text width=1cm,midway, xshift=8mm,yshift=-2mm, draw=none,rectangle] {\small $r$} ([yshift=-5mm]network1.east);
\end{tikzpicture}
\caption{Intelligent sensor network expressed as Dec-POMDP}
\label{fig:dec}
\end{figure}

As we mentioned in Section \ref{sec:intro}, the decentralization property of the POMDP framework changes the nature of the problem to NEXP-complete, a class of problems too complicated to provide any real-time solution. Nevertheless, the theoretical properties of this family of problems have been studied and efficient algorithms have been developed in \cite{Kaelbling:1998:PAP:1643275.1643301}. For example, the Witness algorithm is introduced in \cite{Kaelbling:1998:PAP:1643275.1643301} as a polynomial time alternative to value iteration in policy trees. As these algorithms suffer from extreme memory requirements due to the continuous nature of the problem, locality of interaction has been leveraged in the Networked Distributed POMDP setting \cite{Nair:2005:NDP:1619332.1619356}, where LID-JESP and GOA are introduced for planning in Dec-POMDPs. Contrarily to the above, we will use model-free remedies to circumvent the inherent intractability, an approach that will benefit from lower complexity, both in terms of computation and time.

To learn the optimal policy using a model-free approach one can apply simple single-agent Q-learning. This is performed as follows
\begin{equation}\label{eq:qlearn}
Q(H_t, A_t) = (1-\alpha) Q(H_t,A_t) + a[R_{t} + \gamma \max2_a Q(H_{t+1}, a)]
\end{equation}
Although this approach leads to an optimal policy, it is inappropriate in the Dec-POMDP framework as adopting a centralized point of control creates a large state space and demands global access to information. In \cite{Claus:1998:DRL:295240.295800} independent learning, in which each agent learns its own Q-value function ignoring other agents' actions and observations, is studied. By ignoring the effect of interaction among agents this approach may converge to local optimal policies or oscillate. Nevertheless, independent learning offers a distributed, tractable solution that has proven adequate in relevant applications \cite{AAAI113750}. Motivated by the encouraging results in \cite{Claus:1998:DRL:295240.295800}, we formulate the problem as a population of agents which make decisions independently of each other on how to handle common resources in order to maximize social welfare, i.e., the overall throughput. Our adoption of the powerful framework of Dec-POMDP is justified by the realistic nature of MAC protocol design, as its success will depend on the achievement of low complexity. The work in \cite{Lesser:2003:DSN:940763}  strained the importance of realistic modeling  in networking applications, as specific characteristics have a significant, algorithm-specific impact on the solution. In Section \ref{sec:experiments}, we will experimentally investigate the performance of independent learning under various learning settings in order to draw qualitative conclusions about the appropriate behaviors of sensor agents and design a MAC protocol that surpasses the performance of vanilla IRSA.

\subsection{Virtual experience}

%
%
Q-learning is a model-free learning approach, however due to its conceptual simplicity proves to be inefficient for real-time applications, as extensive interaction with the environment is required. Leveraging past experience is a technique that has successfully been used in demanding RL tasks due to its effectiveness and its respect to the structural properties of Q-learning \cite{MnihKSGAWR13}. Key intuition behind it, is that an agent can update the Q-values of states it has previously visited. These batch updates can significantly decrease convergence time, provided that the agent avoids acting on outdated information. A related notion is that of \textit{virtual experience} \cite{5986747, DBLP:journals/corr/ThomosKFS14}, where an agent ``imagines" state visits  instead of ``remembering" them. The work in \cite{5986747} separated the effect of the environment into ``known'' and ``unknown'' dynamics and  introduced the notion of virtual experience in their attempt to extrapolate experience of actual rewards to states that do not affect the unknown dynamics and are, therefore, equivalent in the light of new information. Virtual experience was applied to post-decision states, and not to actual states. Next, we will proceed by formulating virtual experience in the observational histories of our own learning setting.

As defined in (\ref{eq:history}), an agent's history of observations is a tuple of past buffer states. Based on this information, an agent chooses the preferred number of replicas to send. The unknown environment dynamics in this case include the arrival and collision model, take place after the selection of replica's number, and determine the reward the agent experiences as well as the next observation $\omega$. Although agent's $i$ observation vector $h_i$ is essential for determining the optimal action, we should point out that the unknown dynamics do not directly depend on $h_i$. In particular, if the observation tuple is $H_t^i = \{b^i_{t-w+1}, \cdots, b^i_{t-1},b^i_{t}\}$, then the collision model cannot discern any difference in states of the following form
\begin{align}
H^{i^{\prime}}_t &= \{b^{i^{\prime}} _{t-w+1},\cdots, b^{i^{\prime}}_{t-1}, b^{i^{\prime}}_{t}\} \nonumber\\
    &= \{ b^{i^{\prime}}_{t-w+1}, \cdots,  b^{i^{\prime}}_{t-2} - c_{t-1}^i,  b^{i^{\prime}}_{t-1} - c_{t}^i\},  \\
\text{with} \qquad c^i_t &= b^i_{t-1} - b^i_{t}
\end{align}
where $c^i_t$ is the difference in observations between two consecutive states.

Virtual experience can be viewed as applying the transformation formulated in (\ref{eq:transform}) on visited states and then updating all states that have the same representation. We call $\widetilde{h}$ a virtual state, as it is neither visited nor directly used in the Q-learning update, but serves as an intermediate state in order to acknowledge states equivalent towards the unknown environment dynamics. We illustrate this in Fig. \ref{fig:virtual}.
\begin{equation}
\label{eq:transform}
H_t= \{ b_{t-w+1}, \cdots,  b_{t-2} - c_{t-1},  b_{t-1} - c_{t}\} \xrightarrow{\pazocal{T}} \widetilde{H} = \{c_{t-w+2}, \cdots,  c_t\} 
\end{equation}
The reason for the above formulation is that collisions should intuitively depend on the relative buffer states $\widetilde{h}$, as they determine the channel congestion. The actual values are useful in shaping the eagerness of agents to transmit data. Formally and according to \cite{MastroThesis} a pair $(\tilde{s}, \tilde{a})$ is equivalent to a pair $(s,a)$ if $p(s^{\prime} | s,a) = p(s^{\prime} | \tilde{s}, \tilde{a}), \ \forall s^{\prime} \in S$ and $R(\tilde{s}, \tilde{a})$ can be derived from $R(s,a)$. Following the above observation for each move of an agent a batch update on all pairs $(s_j,a_j)$ with $T(s_j) = \widetilde{h}$ and $a_j = a$  will be performed. Note that we cannot extrapolate experience to states with different actions, as the collision dynamics depend on the action performed.
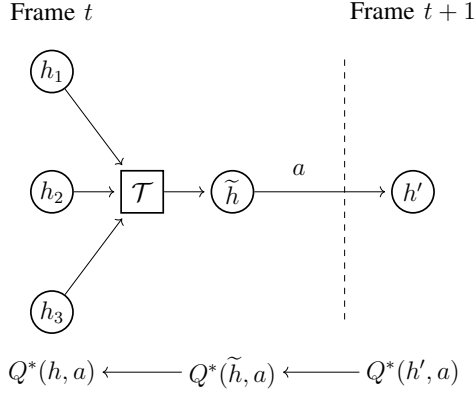
\begin{figure}
\centering
\scalebox{0.8}{
\begin{tikzpicture}[every node/.style={draw,thick,circle,inner sep=0pt, minimum size =0.7cm}]
\node[circle,] (virt) at (current page.center) {$\widetilde{h} $} ;
\node[circle] (s1) at ([xshift=-3cm,yshift=2cm]virt) {$h_1$} ;
\node[circle] (s2) at ([xshift=-3cm]virt) {$h_2$} ;
\node[circle] (s3) at ([xshift=-3cm,yshift=-2cm]virt) {$h_3$} ;
\node[circle] (s') at ([xshift=3cm]virt) {$h^{\prime}$} ;
\node[draw,rectangle] (T1) at ($(s2)!0.5!(virt)$) {$\pazocal{T}$};
\draw[shorten >=0.1cm,->] (s2) -- (T1);
\draw[shorten >=0.1cm,->] (s1) -- (T1);
\draw[shorten >=0.1cm,->] (s3) -- (T1) ;
\draw[shorten >=0.1cm,->] (T1) -- (virt);
\draw[shorten >=0.1cm,->] (virt) -- node[text width=1cm, midway,above, draw=none,rectangle ]  {$a$}  (s');
\node[rectangle,draw=none] (V1) at([yshift=-1cm]s3) {$Q^{*}(h, a)$};
\node[rectangle,draw=none] (V2) at([yshift=-1cm,xshift=3cm]s3) {$Q^{*}(\widetilde{h}, a)$};
\node[rectangle,draw=none] (V3) at([yshift=-1cm,xshift=6cm]s3) {$Q^{*}(h^{\prime}, a)$};
\draw[shorten >=0.1cm,shorten <=0.1cm,->] (V3) -- (V2);
\draw[shorten >=0.1cm,shorten <=0.1cm,->] (V2) -- (V1);
\node[rectangle,draw=none] (T1) at([yshift=1cm]s1) {Frame $t$};
\node[rectangle,draw=none] (T2) at([yshift=1cm,xshift=6cm]s1) {Frame $t+1$};
\draw[dashed] ([shift={(1.5cm,2.2cm)}]virt.east) -- ([shift={(1.5cm,-2.2cm)}]virt.east);
\end{tikzpicture}}
\caption{Virtual experience: $h_1$, $h_2$ and $h_3$ are different states of the Dec-POMDP that are mapped to the same $\widetilde{h}$ through the transformation $\pazocal{T}$. After common action $\alpha$ is performed and the next $h^{\prime}$ is observed, the Q-table is updated for all of them. }
\label{fig:virtual}
\end{figure}
Equipping Q-learning with virtual experience increases computational complexity, as instead of updating one entry of the Q-table in each learning iteration, all $(h,a)$ pairs with the same $\widetilde{h}$ are updated. This complexity increase is equal to the number of those pairs, which we denote by $|\widetilde{H}|$ and can be bounded as
\begin{align}
\label{eq:h}
&0 \leq |\widetilde{\pazocal{H}}| \leq \min \{B+1-B_{\text{min}}, B_{\text{max}}\} & \text{where}\\
B_{\text{min}}&= \argmin_b  \{ b-\sum_{t=w-1}^{\tau}{c_t} \geq 0 \}  & \text{and} \\
B_{\text{max}}&= \argmax_b \{ b-\sum_{t=w-1}^{\tau}{c_t} \leq B \} & \forall \tau \in [0,w-2)
\end{align}
$B_{\text{min}}$ and $B_{\text{max}}$ are used to avoid considering virtual states with numbers of packets in their buffers that are either negative or exceed the maximum capacity $B$.

The conception of virtual experience in \cite{MastroThesis} was not accompanied by its theoretical properties regarding convergence time, we therefore conclude this section with some remarks on the effect of virtual experience on it. Inspired by the work in \cite{Even-Dar:2004:LRQ:1005332.1005333}, where convergence time of Q-learning was studied in relation to its parameters, e.g., the learning rate and discount factor, and lower bounds were computed for synchronous and asynchronous learning using polynomial and linear learning rates, we study how virtual experience affects convergence time and derive a similar bound. We limit ourselves to asynchronous learning using a polynomially decreasing learning rate, as is the current case, and extend it by considering multiple updates in each iteration.

We first study how virtual experience affects coverage time $L$, i.e., the learning iterations necessary to visit all state action pairs at least once and then proceed to bounding convergence time. Our remarks will be based on Lemma 33 from \cite{Even-Dar:2004:LRQ:1005332.1005333}.

%
%
%

\begin{lemma}
Assume that $P$ is the probability of visiting all state action pairs in an interval $k$, where an interval corresponds to a time period of $L$ iterations. Then, using virtual experience, the probability $P^{\varv}$ of visiting all state-action pairs $P$ in an interval $k$ is $|\widetilde{\pazocal{H}}|P$. 
\end{lemma}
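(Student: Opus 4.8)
The plan is to argue iteration-by-iteration, comparing the event that a fixed state-action pair receives a Q-update under plain Q-learning against the same event under virtual experience, and then to lift this comparison to the coverage event over an interval of $L$ iterations. Here $P$ is read as the coverage probability of the baseline (no virtual experience) and $P^{\varv}$ as that of the augmented scheme. The engine of the argument is simply that the batch rule in \eqref{eq:transform} turns a single realized visit into $|\widetilde{\pazocal{H}}|$ simultaneous updates, and I would build everything on Lemma 33 of \cite{Even-Dar:2004:LRQ:1005332.1005333} as stated.

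First I would fix a history-action pair $(h,a)$ and let $p$ be the probability that, in a single iteration, the realized pair sampled under the $\epsilon$-greedy exploration policy equals $(h,a)$. In plain Q-learning this is exactly the probability that $(h,a)$ is updated, since each iteration touches one entry of the Q-table. Under virtual experience, by contrast, the entry $(h,a)$ is updated whenever the realized pair is \emph{any} $(h_j,a)$ with $\pazocal{T}(h_j)=\pazocal{T}(h)$, i.e. any member of the equivalence class that $\pazocal{T}$ collapses onto the virtual state $\widetilde{h}$. By the count in \eqref{eq:h} this class has $|\widetilde{\pazocal{H}}|$ members; treating them as exchangeable under the exploration policy, since they differ only in absolute buffer level, which does not enter the unknown collision dynamics, each carries the same per-iteration probability $p$, so the probability that $(h,a)$ is updated in one iteration becomes $|\widetilde{\pazocal{H}}|\,p$.

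Next I would pass from the per-iteration update probability to the per-interval coverage probability over the $L$ iterations of an interval. Since the coverage probability is, to first order in the (small) per-iteration probabilities, linear in them, scaling every per-iteration update probability by the common factor $|\widetilde{\pazocal{H}}|$ scales the coverage probability by the same factor, which yields $P^{\varv}=|\widetilde{\pazocal{H}}|\,P$ and matches the coverage-time setting on which this result is modeled.

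The hard part will be controlling two approximations hidden in the last step. The first is that the identity cannot hold globally, since $P^{\varv}\le 1$ forces it to be read as a first-order (lower-bound) estimate valid in the regime where $|\widetilde{\pazocal{H}}|\,P$ stays below one, which is precisely the asymptotic regime in which coverage-time bounds are meaningful. The second is the correlation among the $|\widetilde{\pazocal{H}}|$ entries touched by one batch: because they all share the virtual state $\widetilde{h}$, covering one covers the whole class at once, so I would argue that the classes partition the state-action space and that successive exploratory actions are independent enough for the per-class coverage contributions to aggregate cleanly into the stated linear scaling.
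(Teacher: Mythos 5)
Your proposal is correct in the same heuristic sense as the paper's own proof, and it rests on the same engine --- a single realized visit triggers $|\widetilde{\pazocal{H}}|$ simultaneous updates, combined with an i.i.d.\ sampling assumption --- but the route you take to the multiplicative identity is genuinely different. The paper never works with per-iteration probabilities at all: it reinterprets $P$ as the \emph{fraction of unique pairs visited}, $P = L_u/(|\pazocal{S}|\times|\pazocal{A}|)$, where $L_u$ counts the iterations in which a pair is visited for the first time, asserts $L_u^{\varv} = |\widetilde{\pazocal{H}}| L_u$ because each iteration now touches $|\widetilde{\pazocal{H}}|$ entries, and divides --- a pure counting argument whose only probabilistic ingredient is sampling with replacement. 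You instead fix a pair $(h,a)$, compute its per-iteration update probability as a union over the equivalence class collapsed by $\pazocal{T}$, whose size is $|\widetilde{\pazocal{H}}|$ by (\ref{eq:h}), and then pass to the interval-level coverage probability by a first-order linearization in the small per-iteration probabilities. What your route buys is honesty about the regime of validity: you make explicit that the identity saturates at $P^{\varv}\le 1$ and must therefore be read as a first-order estimate valid when $|\widetilde{\pazocal{H}}|P$ stays below one, and that the $|\widetilde{\pazocal{H}}|$ entries touched by one batch are perfectly correlated (covering one covers the whole class) --- both issues that the paper's frequency identification silently absorbs, since frequencies scale linearly without any saturation caveat. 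What the paper's route buys is brevity, at the cost of conflating a probability with an expected fraction of covered pairs. Neither argument is fully rigorous, but yours localizes the gaps more precisely while arriving at the same conclusion.
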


\begin{proof}
The probability $P$ can also be interpreted as the percentage of unique pairs visited, i.e., $P=L_{u}/(|\pazocal{S}|\times |\pazocal{A}|)$, where $L_u$ is the number of iterations where the pair was visited for the first time and the denominator represents the size of the state-action space. We assume that states are sampled with replacement from an i.i.d. probability distribution. As noted earlier, virtual experience increases the number of states updated in a learning iteration by $|\widetilde{\pazocal{H}}|$, with $|\widetilde{\pazocal{H}}|$ defined in (\ref{eq:h}). It follows then that $L_u^{\varv} =|\widetilde{\pazocal{H}}| L_u$, where $L_u^{\varv}$ is the number of iterations where the visited pair was unique using virtual experience. Thus, $P^{\varv} = |\widetilde{\pazocal{H}}| P$.
\end{proof}

\begin{lemma}
\label{lemma3}
Assume that from any start state we visit all state-action pairs with probability $|\widetilde{\pazocal{H}}|P$ in $L$ steps. Then with probability $1-\delta$ from any initial state we visit all state-action pairs in $L\frac{\log_2(\delta)}{\log_2(1-|\widetilde{\pazocal{H}}|P)}$ steps for a learning period of length $[\frac{\log_2(\delta)}{\log_2(1-|\widetilde{\pazocal{H}}|P)}]$.
\end{lemma}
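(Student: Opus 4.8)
The plan is to amplify the single-interval success probability $p = |\widetilde{\pazocal{H}}|P = P^{\varv}$ established in the preceding lemma into a high-probability guarantee by chaining together several consecutive intervals. First I would partition the learning horizon into consecutive blocks, each of length $L$ iterations, and view each block as one fresh attempt to cover all state-action pairs. The essential leverage is the hypothesis that the coverage guarantee holds \emph{from any start state}: this permits me to condition on whichever state the process occupies at the beginning of a given block and to conclude that, regardless of that state and of everything that happened in earlier blocks, the block covers all pairs with probability at least $p$, hence fails with probability at most $1-p$.

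The second step would be to propagate this per-block failure bound across $m$ successive blocks. Writing $E_j$ for the event that block $j$ fails to cover every pair, the uniform-over-start-states bound gives $\Pr[E_j \given E_1,\dots,E_{j-1}] \leq 1-p$ for each $j$, so by the chain rule $\Pr\bigl[\bigcap_{j=1}^{m} E_j\bigr] \leq (1-p)^m$. Equivalently, the probability that at least one of the $m$ blocks achieves full coverage is at least $1-(1-p)^m$.

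The third step is to choose $m$ so that $1-(1-p)^m \geq 1-\delta$, i.e. $(1-p)^m \leq \delta$. Taking base-$2$ logarithms yields $m\log_2(1-p) \leq \log_2(\delta)$; since $0<p<1$ makes $\log_2(1-p)$ strictly negative, dividing reverses the inequality and gives $m \geq \log_2(\delta)/\log_2(1-p)$. Hence $m = \lceil \log_2(\delta)/\log_2(1-|\widetilde{\pazocal{H}}|P) \rceil$ blocks suffice, and since each block has length $L$, the total number of iterations is $Lm = L\,\log_2(\delta)/\log_2(1-|\widetilde{\pazocal{H}}|P)$ up to the rounding in the ceiling, which is exactly the claimed bound.

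The main obstacle I expect is not the arithmetic but justifying the product bound $(1-p)^m$: the successive blocks are not literally independent, because the state at the start of block $j$ is determined by the trajectory of block $j-1$. The resolution is precisely the ``from any start state'' clause in the hypothesis, which forces the per-block failure probability to be at most $1-p$ \emph{uniformly}, so the conditional bound $\Pr[E_j \given E_1,\dots,E_{j-1}] \le 1-p$ holds and the chain-rule product goes through without any genuine independence assumption. A secondary point to treat carefully is the sign of $\log_2(1-p)$ when dividing, and noting that the bracket in the statement denotes the ceiling, so that $m$ is an integer number of intervals.
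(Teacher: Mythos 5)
Your proposal is correct and follows essentially the same route as the paper's proof: treat each length-$L$ interval as an independent coverage attempt, bound the probability that all $k$ intervals fail by $(1-|\widetilde{\pazocal{H}}|P)^k$, set this equal to $\delta$, and convert the base of the logarithm. The only difference is one of rigor, not substance --- you explicitly justify the product bound via the chain rule and the uniform ``from any start state'' hypothesis (and note the ceiling), whereas the paper simply asserts the product form.
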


\begin{proof}
The probability of not visiting all state-action pairs in $k$ consecutive intervals is $(1-|\widetilde{H}|P)^k$. If we define $k$ as $\log_{1-|\widetilde{\pazocal{H}}|P} (\delta)$, then this probability equals $\delta$ and $L\log_{1-|\widetilde{\pazocal{H}}|P}(\delta) = L\frac{\log_2 (\delta)}{\log_2 (1-|\widetilde{\pazocal{H}}|P)}$ steps will be necessary to visit all state-action pairs.
\end{proof}

\begin{corollary}\label{cor1}
Virtual experience alters coverage time $L$ by a factor of $\frac{\log_2(1-P)}{\log_2(1-|\widetilde{\pazocal{H}}|P)}$.
\end{corollary}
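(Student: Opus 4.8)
The plan is to read the coverage time straight off Lemma~\ref{lemma3} in two regimes---with and without virtual experience---and then form their ratio, since the corollary asserts precisely a multiplicative relationship between the two. Lemma~\ref{lemma3} tells us that, given that all state-action pairs are covered within a single interval of $L$ iterations with probability $|\widetilde{\pazocal{H}}|P$, achieving coverage with confidence $1-\delta$ requires $k_{\varv}=\frac{\log_2(\delta)}{\log_2(1-|\widetilde{\pazocal{H}}|P)}$ such intervals, i.e. a total coverage time of
\begin{equation}
L_{\varv}=L\,\frac{\log_2(\delta)}{\log_2\!\left(1-|\widetilde{\pazocal{H}}|P\right)}
\end{equation}
iterations.

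Next I would pin down the baseline. Standard asynchronous Q-learning updates exactly one state-action pair per iteration, which is the special case $|\widetilde{\pazocal{H}}|=1$ of the first lemma of this subsection (equivalently $P^{\varv}=P$). Feeding this back through the same argument of Lemma~\ref{lemma3} gives the coverage time without virtual experience,
\begin{equation}
L_{\text{std}}=L\,\frac{\log_2(\delta)}{\log_2\!\left(1-P\right)}.
\end{equation}
Taking the ratio $L_{\varv}/L_{\text{std}}$, the common factors $L$ and $\log_2(\delta)$ cancel and I obtain
\begin{equation}
\frac{L_{\varv}}{L_{\text{std}}}=\frac{\log_2\!\left(1-P\right)}{\log_2\!\left(1-|\widetilde{\pazocal{H}}|P\right)},
\end{equation}
which is exactly the claimed factor. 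So the corollary is essentially a one-line quotient once the two instances of Lemma~\ref{lemma3} are in hand, and I do not expect any genuine computation beyond this cancellation.

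The one place I would be careful---and the only real subtlety---is the bookkeeping of what ``coverage time'' denotes, because the statement reuses the symbol $L$ both for the length of a single interval and for the overall time to cover all pairs. I would make explicit that the per-interval length $L$ is held fixed across both regimes, since it reflects the sampling distribution rather than the update rule, and that the quantity virtual experience actually rescales is the number of intervals $k$. Finally I would note, as a consistency check, that $|\widetilde{\pazocal{H}}|\ge 1$ forces $\log_2(1-|\widetilde{\pazocal{H}}|P)\le\log_2(1-P)<0$, so the factor is at most $1$; this confirms that virtual experience never increases---and generically decreases---coverage time, in line with the motivation of the section.
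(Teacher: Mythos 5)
Your proposal is correct and matches the paper's own (implicit) argument: the corollary follows by applying Lemma~\ref{lemma3} once with probability $|\widetilde{\pazocal{H}}|P$ and once with the baseline probability $P$ (the case $|\widetilde{\pazocal{H}}|=1$), and taking the quotient of the two coverage times, in which $L$ and $\log_2(\delta)$ cancel. Your additional remarks---that the per-interval length $L$ is held fixed across both regimes and that the factor is at most $1$ since $\log_2(1-|\widetilde{\pazocal{H}}|P)\le\log_2(1-P)<0$---are sound and only make explicit what the paper leaves unstated.
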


According to \cite{Even-Dar:2004:LRQ:1005332.1005333}, convergence time depends on the covering time based on the following theorem.
\begin{theorem}\label{theo1}
Let $Q_t$ be the value of the asynchronous Q-learning algorithm using polynomial learning rate at time $\tau$. Then, with probability at least $1-\delta$, we have $\norm{Q_t-Q^*} \leq \epsilon$, given that
\begin{equation*}
\tau= \Omega(L^{3+1/\phi} + L^{1/(1-\phi)})
\end{equation*}
\end{theorem}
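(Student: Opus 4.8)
The plan is to derive the statement as a direct instantiation of the convergence-rate bound for asynchronous Q-learning with a polynomially decreasing learning rate proved in \cite{Even-Dar:2004:LRQ:1005332.1005333}, rather than to re-prove that bound from scratch. Their result expresses the time $\tau$ needed for $\norm{Q_t - Q^*}\le\epsilon$ (with confidence $1-\delta$) solely in terms of the covering time $L$ and the learning-rate exponent $\phi\in(1/2,1)$ of $\alpha_t = 1/t^{\phi}$, yielding exactly $\tau = \Omega(L^{3+1/\phi} + L^{1/(1-\phi)})$. Since the covering time of our scheme has already been characterized in the first Lemma and in Lemma~\ref{lemma3}, and its dependence on virtual experience quantified in Corollary~\ref{cor1}, the task reduces to checking that our learning process falls within the scope of the cited theorem.

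First I would verify the hypotheses in our formulation. The rewards $R^i_t$ are bounded: since the buffer has finite capacity $B$, both $-b^i_t$ and $b^i_{t-1}-b^i_t$ take values in a bounded range. The discount factor obeys $0\le\gamma<1$ by assumption, so the Bellman operator underlying the update (\ref{eq:qlearn}) is a contraction with a unique fixed point $Q^*$. The learning rate is polynomial with exponent $\phi\in(1/2,1)$ and therefore satisfies the Robbins--Monro conditions $\sum_t\alpha_t=\infty$ and $\sum_t\alpha_t^2<\infty$ stated earlier, which are precisely the conditions assumed in \cite{Even-Dar:2004:LRQ:1005332.1005333}. Finiteness of $L$, guaranteed with probability $1-\delta$ by Lemma~\ref{lemma3}, ensures every state-action pair is updated infinitely often, so the asynchronous analysis applies.

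The second step is to recall, at the level of a sketch, why the two exponents arise, so that substituting our (possibly virtual-experience-reduced) covering time into the bound is meaningful. In \cite{Even-Dar:2004:LRQ:1005332.1005333} the Q-learning error is decomposed across consecutive covering intervals of length $L$, and within each interval the recursion is split into a contraction term driven by $\gamma$ and a stochastic-approximation noise term. The $L^{1/(1-\phi)}$ contribution comes from the number of intervals required for the step size $1/t^{\phi}$ to drive the accumulated noise below $\epsilon$, while the $L^{3+1/\phi}$ contribution accounts for the propagation of bias through the $\max$ operator over the value-iteration-like sweeps, each of which needs a full covering interval. I would cite these two mechanisms and conclude that the bound holds with $L$ set to the covering time of our algorithm.

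The main obstacle, and the only point requiring genuine care beyond bookkeeping, is justifying that virtual experience does not violate the cited theorem's assumptions when we later substitute the reduced covering time from Corollary~\ref{cor1}. Because each batch update modifies disjoint Q-table entries indexed by distinct equivalent pairs $(s_j,a_j)$ with $T(s_j)=\widetilde{h}$, and because the equivalence criterion of \cite{MastroThesis} guarantees that every such update uses the correct transition kernel and a recoverable reward, each entry still evolves under an honest single-pair stochastic-approximation recursion. Hence the per-entry martingale structure on which \cite{Even-Dar:2004:LRQ:1005332.1005333} relies is preserved, and the sole net effect of virtual experience is to shrink $L$ by the factor of Corollary~\ref{cor1}, leaving the exponents $3+1/\phi$ and $1/(1-\phi)$ intact.
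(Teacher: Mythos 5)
Your proposal is correct and takes essentially the same route as the paper: the paper's entire proof is the single sentence that the proof is identical to Theorem 4 of \cite{Even-Dar:2004:LRQ:1005332.1005333}, i.e., the theorem is a direct citation of that convergence bound for asynchronous Q-learning with polynomial learning rate. Your additional work (checking bounded rewards, $\gamma<1$, the Robbins--Monro conditions, and the compatibility of virtual-experience batch updates) only makes explicit what the paper leaves implicit, so there is nothing to flag.
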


where $\phi$ is a parameter that determines how fast the learning rate converges to zero, i.e, $\alpha = 1/t^\phi$.
\begin{proof}
The proof is identical with Theorem 4 in \cite{Even-Dar:2004:LRQ:1005332.1005333}.
\end{proof}

Combining Theorem \ref{theo1} and Lemma \ref{lemma3} we conclude that the lower bound for convergence time is reduced to $\Big(L\frac{\log_2 (\delta)}{\log_2 (1-|\widetilde{\pazocal{H}}|P)}\Big)^{3+1/\phi} + \Big(L\frac{\log_2 (\delta)}{\log_2 (1-|\widetilde{\pazocal{H}}|P)}\Big)^{1/(1-\phi)}$ .
\subsection{Computational complexity}\label{sec:complexity}
The proposed protocol is a computationally attractive alternative to transmission strategies that are based on finite length analysis \cite{7247494}, which has exponential complexity. In our framework, at each learning iteration an agent has to choose its transmission strategy and then update its local Q-table. In contrast to the work in \cite{Toni2018}, in the proposed scheme the action space is discrete and increases linearly with  $d$. The size of the observation space, which coincides with the size of the Q-table, is $(B+1)^w$, where $B$ is the size of sensor nodes' buffer and $w$ is the history window. The observation space  scales exponentially with $w$ and linearly with $B$. Finally, the complexity associated with the number of agents is $O(1)$, as each agent learns independently.

\section{Simulation results}\label{sec:experiments}

This section begins with a performance comparison of the proposed Dec-RL IRSA protocol and vanilla IRSA. It subsequently studies the effect of different learning schemes on the performance of independent learning with the two-fold goal of drawing conclusions about the behavior of agents and providing a guideline for configuring system parameters to determine the optimal strategy. Finally, we evaluate the proposed scheme advanced with the virtual experience concept to show the reduced convergence time.

\subsection{Simulation Setup}

The following experiments are performed on a toy network with frames of size 10 and channel loads $G \in [0.1,\cdots,1 ]$, which remain constant throughout the learning and simulation of communication time. Unless stated otherwise, performance is averaged over 1000 Monte Carlo trials, the number of sensor nodes is determined by $M=G\cdot N$,  learning requires 1500 iterations and confidence intervals are calculated based on 20 independent experiments with $97.5\%$ confidence level. As regards configuration of learning, we experimentally validated that e-greedy exploration with a constant exploration rate $e=0.05$, a decreasing learning rate following $\alpha = 1.111 \cdot 0.9^{iter}$ formula, where $iter$ is the number of times the current state-action pair has been visited, and a constant discount factor $\gamma=0.98$ offer the optimal policy. As a baseline method for our comparisons we use IRSA with $\Lambda(x) = 0.25x^2 + 0.60x^3 + 0.15x^8$, which was experimentally evaluated in the work of \cite{5668922} and proved superior to other commonly used distributions derived in \cite{Storn1997}.
%
%
%

\subsection{Protocol Comparison}

Based on the observations of the work in \cite{Lesser:2003:DSN:940763}, a protocol orchestrating a multi-agent system should be examined in the regard of the following properties: completeness, i.e., its ability to find the optimal solution, if any, rate of convergence, complexity and scalability. Of these, completeness is a requirement often dropped in real-time, non-stationary environments, as convergence to a good solution is more valued than exhausting one's resources, i.e., CPU power, time and memory,  in the vain pursuit of the optimal one. As regards scalability, our method is invariant to the number of agents due to independent learning, while the complexity  scales exponentially with the size of the observation history. Nevertheless, as we show later, our scheme gets most of the benefits from the history consideration by adopting a short history window. Hence, complexity is not an issue for our solution.

Fig. \ref{fig:statistical} performs a statistical analysis on the performances of the two protocols under consideration by presenting confidence intervals. From this figure, it is obvious that Dec-RL IRSA is superior to vanilla IRSA in all cases with the difference gap becoming wider for channel loads above $0.6$. We also observe that performance has higher variations in high channel loads. Fig. \ref{fig:convergence} illustrates convergence time for independent learning in different channel loads. From this figure, we can see that convergence is guaranteed and is fast for low channel loads. For $G=0.2$ only four learning iterations are necessary, while for $G=0.4$ seven iterations are needed. In the case of high channel loads Dec-RL IRSA fails to transmit messages faster than their arrival rate, the node's buffer thus saturates fast to $r_i = -B$ for $G=1$ and tends to saturate at the end of the episode for $G=0.8$. Based on this observation, we design a mechanism for agents to detect ``\textit{bad}'' episodes and reset the POMDP to an arbitrary state. We classify an episode as ``\textit{bad}'' if the rewards deteriorate for three consecutive iterations.

\begin{figure}[t]
\centering
\begin{minipage}[t]{.45\textwidth}
\includegraphics[scale = 0.4]{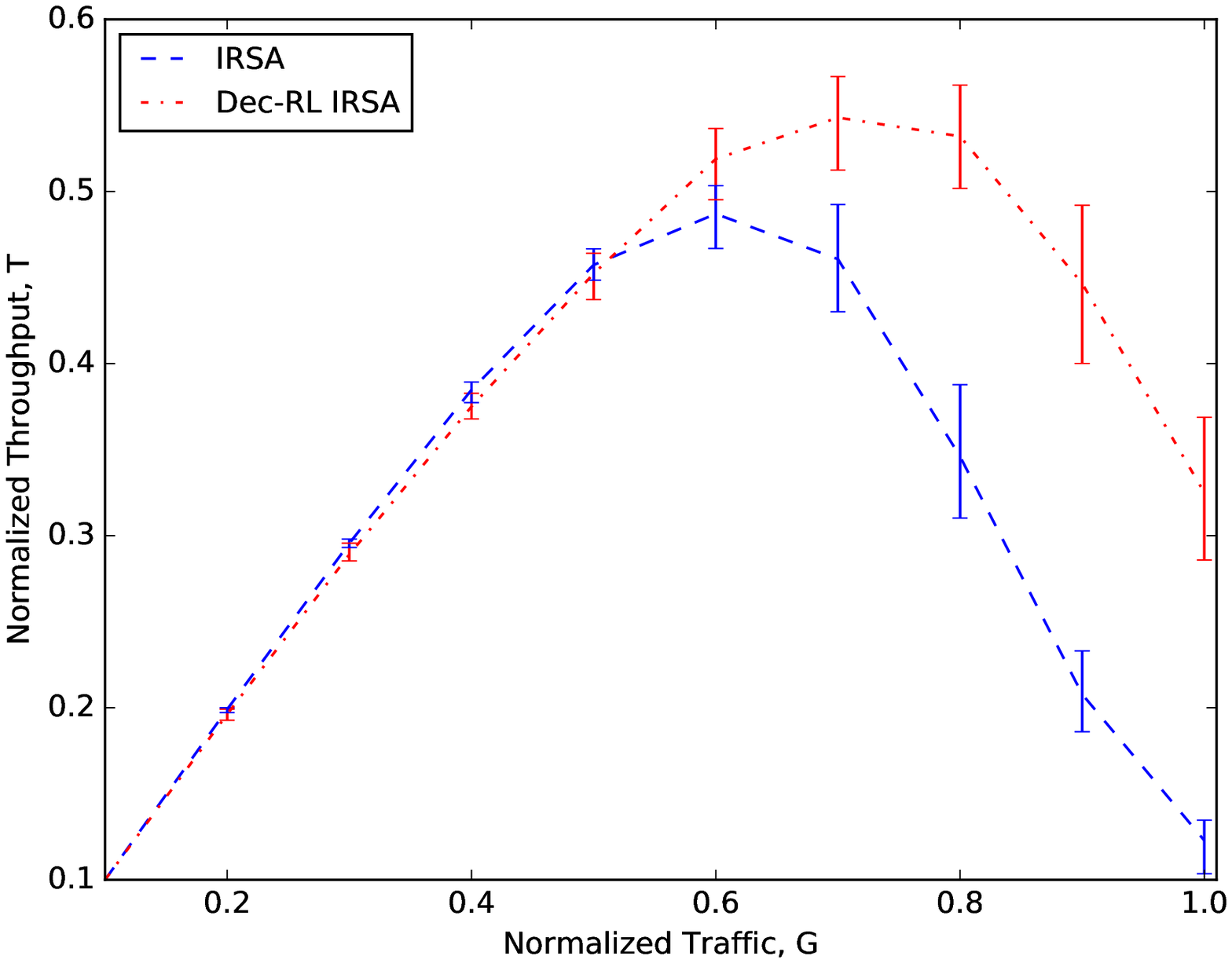}
\caption{Achieved throughput comparison of IRSA and Dec-RL IRSA on a toy network for varying channel loads. Confidence intervals were calculated on 20 independent experiments, each one with 250 Monte Carlo trials.}
\label{fig:statistical}
\end{minipage}\hfill
\begin{minipage}[t]{.45\textwidth}
\includegraphics[scale = 0.4]{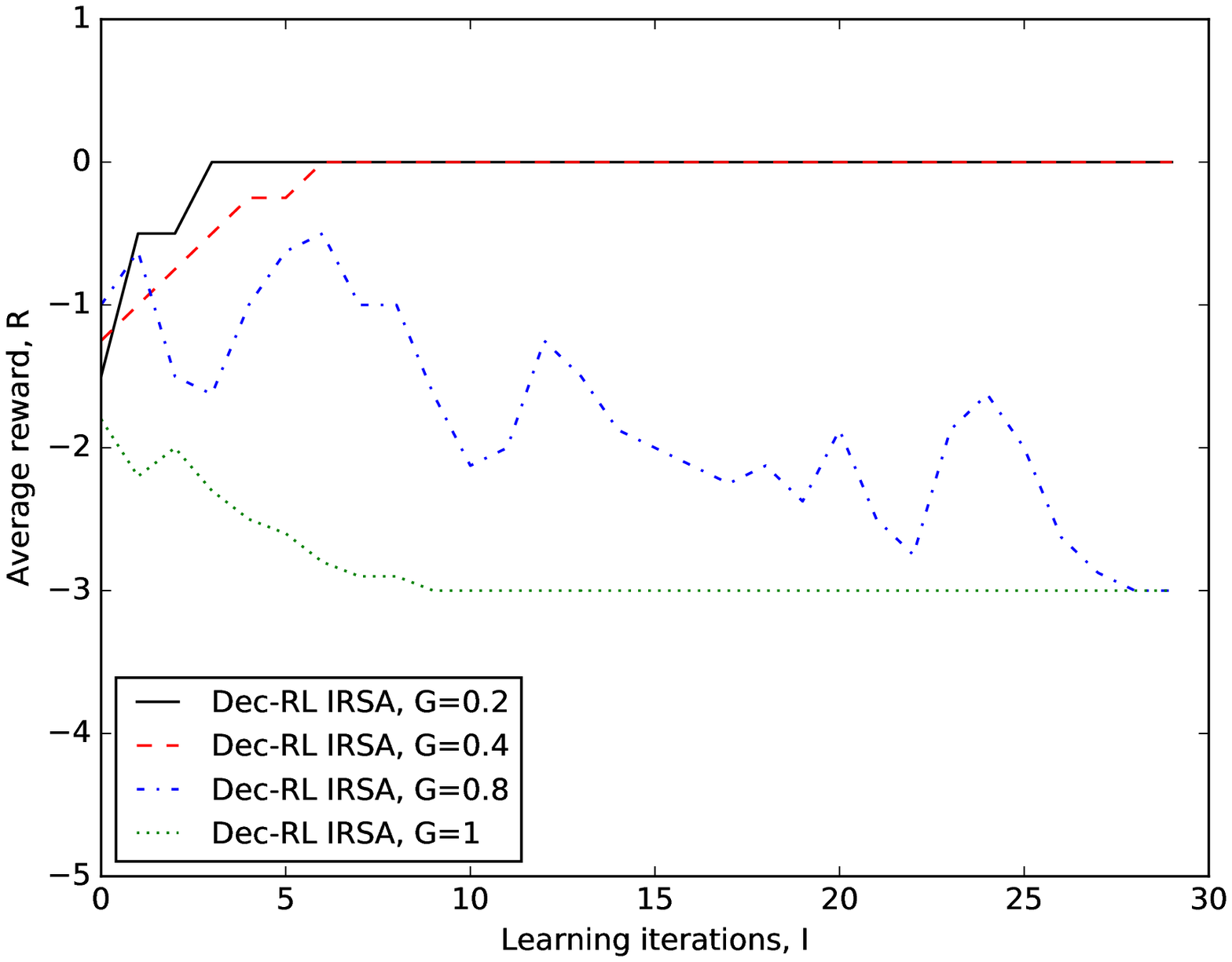}
\caption{Average rewards of Dec-RL IRSA for different channel loads $G$ and the $25^{th}$ episode of its learning time in a run of  50 episodes. }
\label{fig:convergence}
\end{minipage}
\end{figure}

Fig. \ref{fig:scale} illustrates how Dec-RL IRSA and vanilla IRSA achieved throughput changes with different frame sizes ($G=0.6, 0.8, 1$). As regards scalability of Dec-RL IRSA, it appears robust and its performance increases with bigger frame sizes. This can be attributed to the fact that learning is more effective in more complex networks, where collisions occur more often, thus, learning to avoid other agents has a more profound impact on the overall throughput. Vanilla IRSA also improves its performance for increased frame sizes, as it provingly works better in asymptotic settings. This is attributed to the fact that the probability distribution $\Lambda(x)$ is computed using asymptotic analysis and is therefore closer to optimal for frames that exceed 200 time slots. Nevertheless, the performance gap  of the observed throughput of Dec-RL IRSA compared with vanilla IRSA remains high in heavy channel loads ($G=1$), due to the waterfall effect of vanilla IRSA. To conclude scalability analysis, the slight superiority of vanilla IRSA manifested for low $G$ in asymptotic settings is irrelevant to practical scenarios, as the assumption of very large frame size $N$ leads to inefficient implementations, in particular in sensor and IoT networks, that require a complex receiver and introduce delay.

\subsection{Effect of state space size}

The size of the state space, i.e., the number of possible states for an agent, depends on the length of the history of observations, as well as the maximum value of the observations, which is equal to $B+1$ , the size of the buffers of agents. Increasing $B$ has a two-fold effect. Firstly, it increases the size of the state space, thus making learning harder due to the need for longer exploration. Secondly, it dilates the range of rewards, thus agents are made more eager to transmit. Assuming buffer sizes of constant size, constrained by characteristics of the sensor nodes, one anticipates to improve performance of learning by increasing the history window, as that will lead to better approximation of actual states. Nevertheless, letting memory constraints aside, this will lead to an exponential increase of the world size leading either to intractable problems or high time requirements. Thus, it is crucial to determine the minimum amount of information necessary for agents to derive efficient policies. Note that for the sake of a fair comparison learning iterations were also increased to 3000 for increased history window and buffer size. Fig. \ref{fig:buffer} demonstrates that using a value of $B=1$, i.e., only one packet is kept in the buffer, leads to lower throughput for channel loads above $0.6$, as agents are not made eager enough to transmit. On the other hand, increased buffer size improves the perceived throughput for loads above $0.8$, but it slightly degrades it for the rest.

Regarding history size, Fig. \ref{fig:history} reveals that the effect of increased world size is more profound. This results from the fact that, according to Section \ref{sec:complexity}, size scales exponentially with $w$ and linearly with $B$. We observe that by decreasing the window to $w=2$, a severe degradation in performance is observed, suggesting that the information provided to the agents through the observation tuples is not substantial. Increasing the learning iterations for $w=8$ has a counterintuitive effect, as performance is degraded, whereas we would expect that an increased world size would benefit from larger training times. In this case, $800$ learning iterations perform optimally, so we can assume that by equipping agents with larger memory leads to learning of better actions. Still, considering the current parameterization, $w=4$ is the best performing choice.
\begin{figure}[t]
\centering
\begin{minipage}[t]{.45\textwidth}
\includegraphics[scale = 0.4]{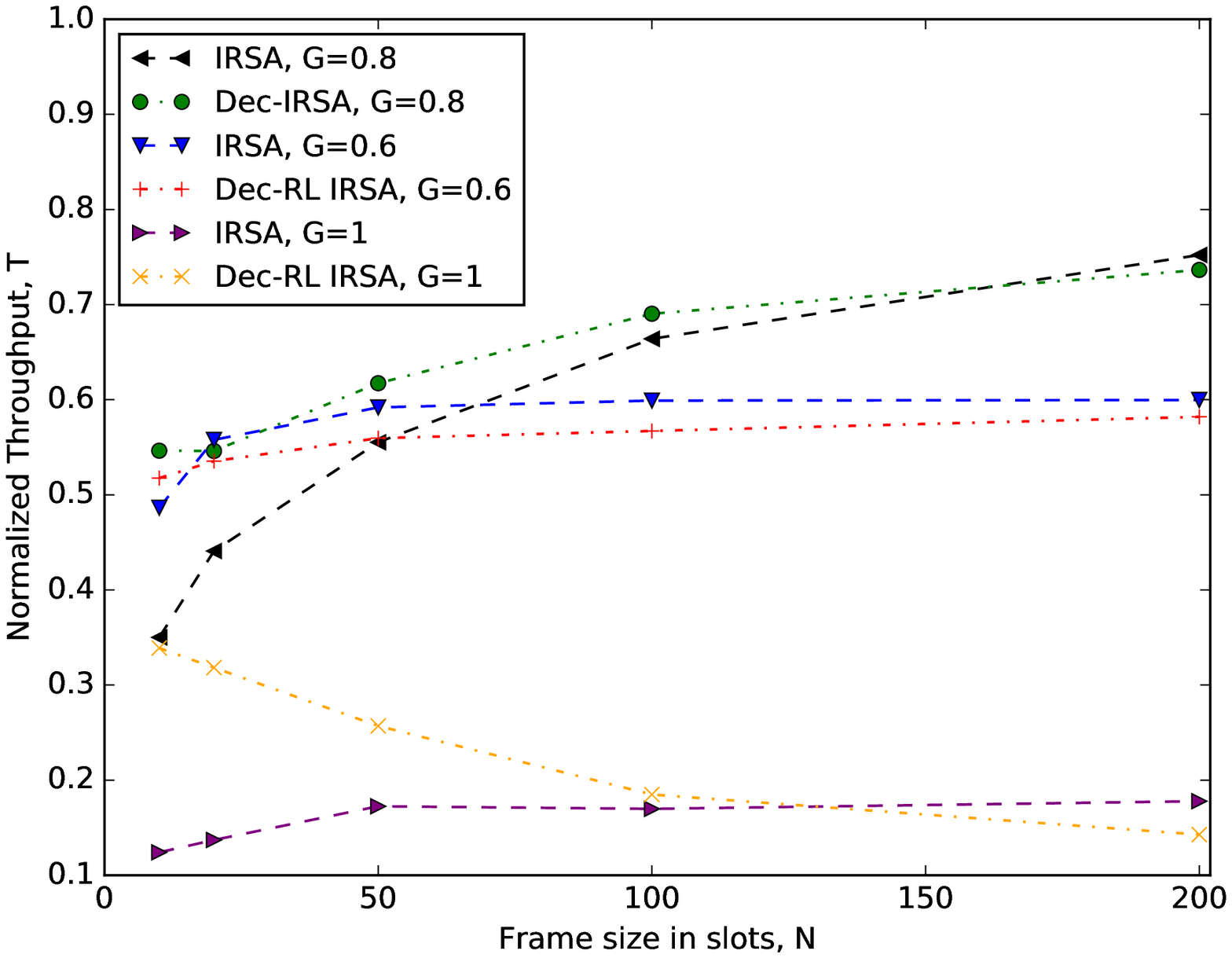}
\caption{Achieved throughput comparison of IRSA and Dec-IRSA for varying frame sizes $N$ and channel traffic $G=0.6$ and $G=0.8$.}
\label{fig:scale}
\end{minipage}\hfill
\begin{minipage}[t]{.45\textwidth}
\includegraphics[scale = 0.4]{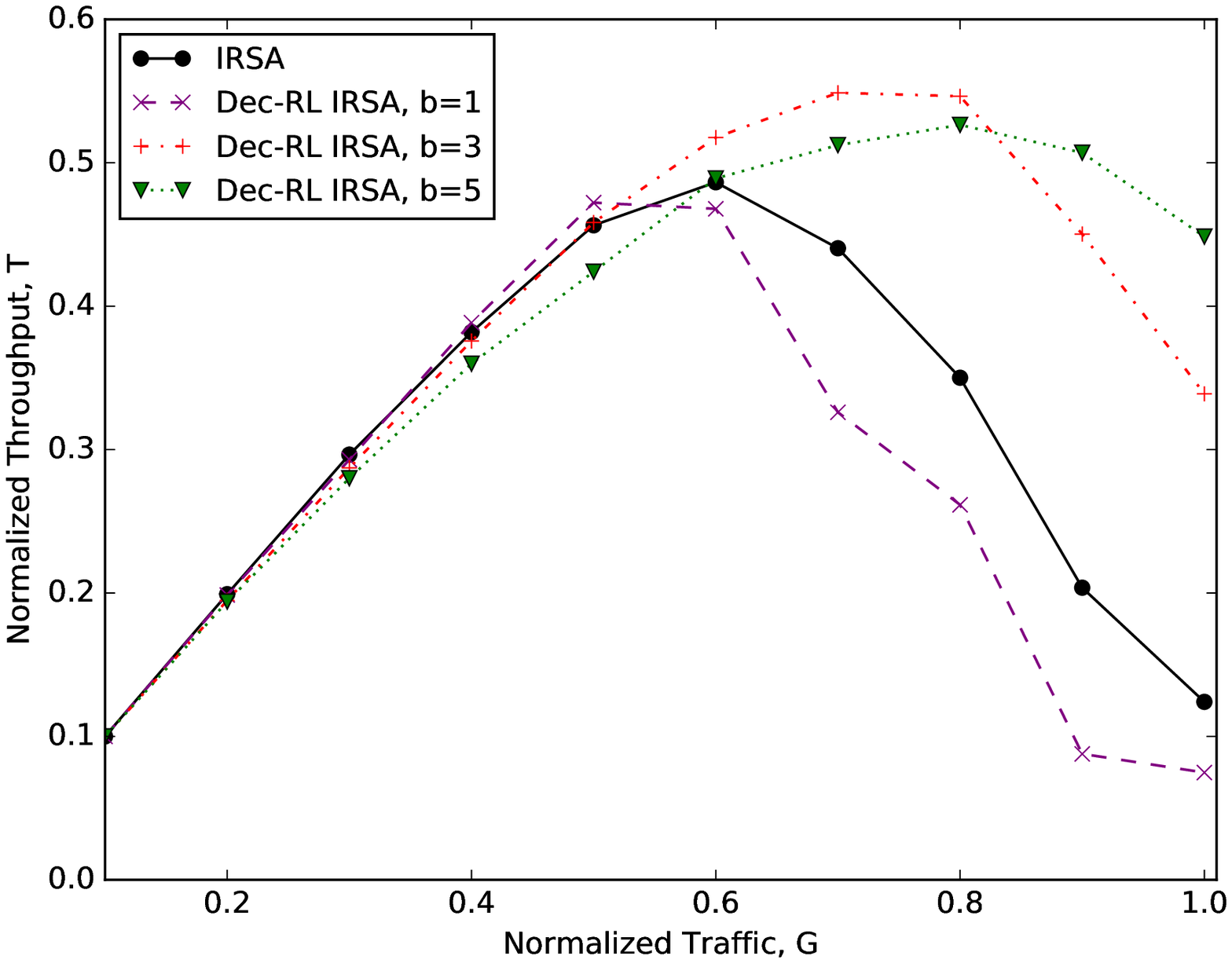}
\caption{Comparison of achieved throughput for different buffer sizes of sensors.}
\label{fig:buffer}
\end{minipage}
\end{figure}
\subsection{Virtual experience}\label{sec:expvirt} 
Virtual experience was introduced to reduce convergence time, which we experimentally measure using the \textit{weighted percent error metric} and $\epsilon$-convergence time, similar to the work in \cite{5986747}. Fig. \ref{fig:virtual} shows how throughput varies for different number of learning iterations and suggests that, using virtual experience, the optimal number of iterations was reduced from 1500 to 500. Fig. \ref{fig:statistical_virtual} performs a statistical analysis on $\epsilon$-convergence time for different channel loads using a $95\%$ confidence level on 40 independent experiments and $\epsilon=0.5$. We observe that convergence is fast for low loads regardless of the use of virtual experience. For $G \geq 0.6$, however, we observe that virtual experience exhibits an improvement of around $80\%$, which can be attributed to increasing the number of batch updates by a factor of $\widetilde{\pazocal{H}}$. Also, vanilla Dec-IRSA usually fails to converge for high channel loads, although throughput remains close to optimal. This observation suggests that, in this case, there are different policies that lead to optimal behavior, so vanilla Dec-RL IRSA is less biased to the optimal one. Note that the degradation in performance with increasing learning iterations, observed in Fig. \ref{fig:virtual} and manifested at around 1500 iterations for vanilla Dec-RL IRSA and 500 using virtual experience, is attributed to over-training.  
%
%

\subsection{Waterfall effect} 

The performance of IRSA has been proven to be governed by a stability condition \cite{5668922} which leads to a waterfall effect similar to the one observed in the decoding of LDPC codes \cite{910578}. From a learning perspective, this profoundly changes the nature of the problem and thus the learning objective. As described in Section \ref{sec:dec-pomdp}, the problem is one of agents competing for a pool of common resources.This formulation resembles the El Farol bar problem, a well-studied scenario in the reinforcement learning literature, but this description is not rich enough to illustrate the learning objectives of individual agents. In the realm of low channel loads ($G < G^*$), where resources are abundant, agents must learn to coordinate their actions, as there is a number of replicas to transmit that optimizes packet throughput. Note that for low channel loads ($G \leq 0.5$) even a random strategy is appropriate, so learning is of no practical interest. In the realm of high channel loads ($G \geq G^*$) however we can acknowledge the task as a Dispersion game \cite{GrenagerAAAI02}, where agents need to  cooperate in order to avoid congesting the channel by exploiting it in different time frames. Different problem nature  urges for different learning behavior, thus we expect that parameterization of learning should vary with $G$. Fig. \ref{fig:waterfall} illustrates the performance of three different parameterizations, each one optimal for a different range of values for $G$. The random strategy was implemented by sampling the number of replicas $l$ uniformly from $\{1,\cdots,d\}$ at each node's transmission. Note that $G^∗$ and $G^{\text{low}}$ stand for the threshold below which the probability of unsuccessful transmission is negligible and a random strategy is optimal, respectively. We observe that by optimizing the parameters for a particular range of $G$ values, we obtain significant gains in the region of interest ($G > 0.6$).

\begin{figure}[t]
\centering
\begin{minipage}[t]{.45\textwidth}
\includegraphics[scale=0.4]{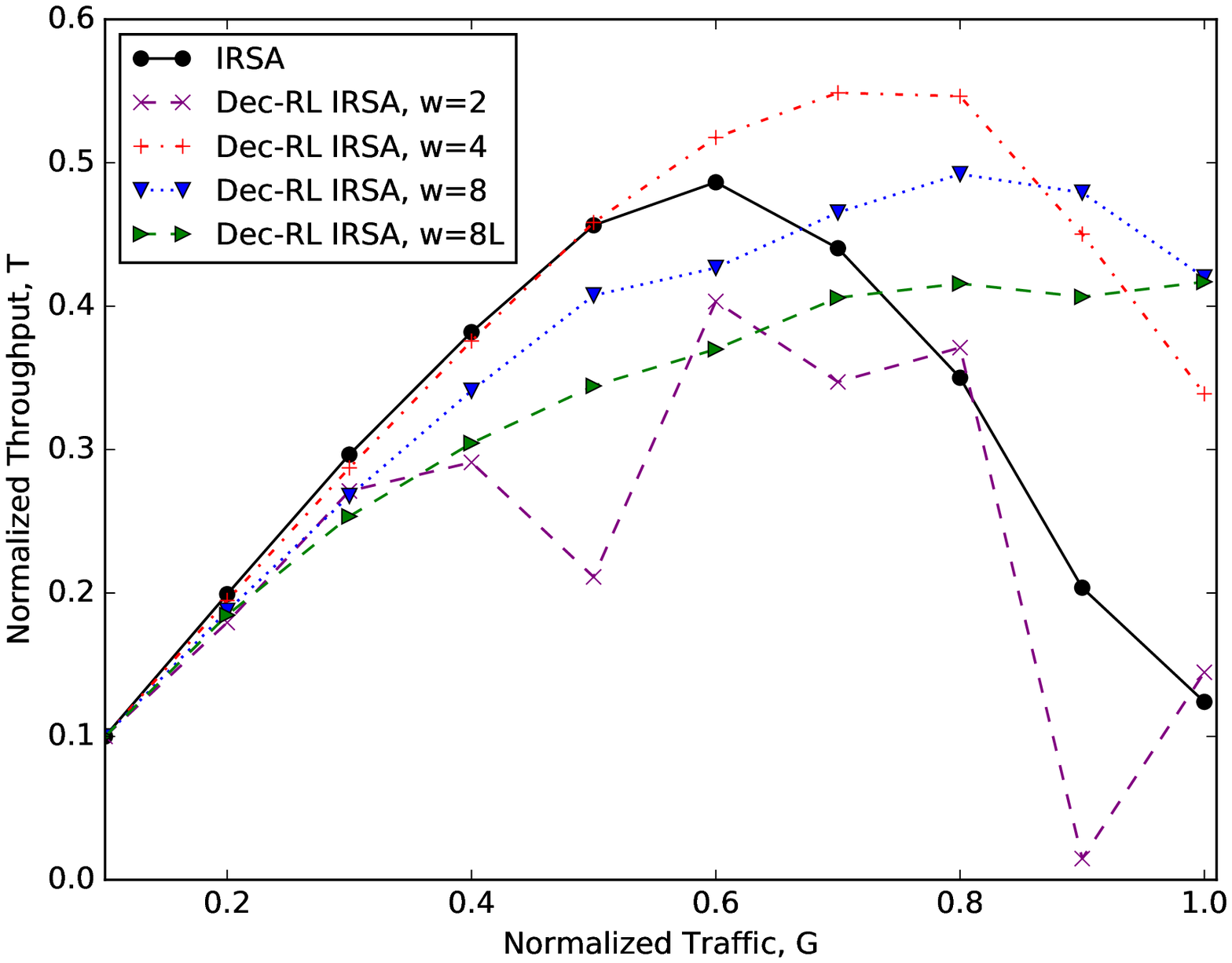}
\caption{Comparison of achieved throughput for different history windows.}
\vspace{-0.8cm}
\label{fig:history}
\end{minipage} \hfill
\begin{minipage}[t]{.45\textwidth}
\includegraphics[scale = 0.4]{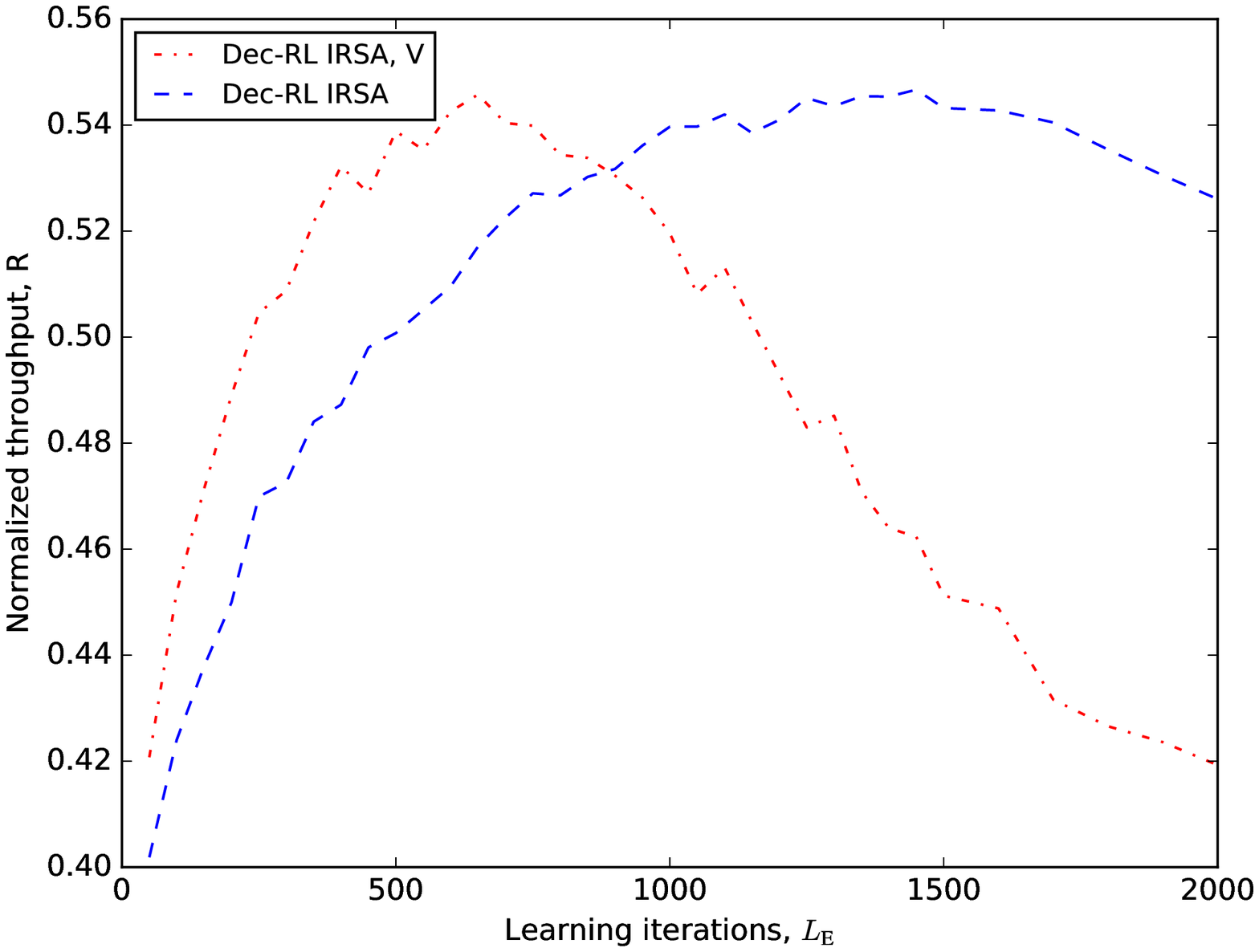}
\caption{Comparison of throughput of virtual and vanilla Dec-RL IRSA for different number of learning iterations and $G=0.7$.}
\label{fig:virtual}
\end{minipage}
\end{figure}
\begin{figure}[t]
\centering
\begin{minipage}[t]{.45\textwidth}
\includegraphics[scale=0.4]{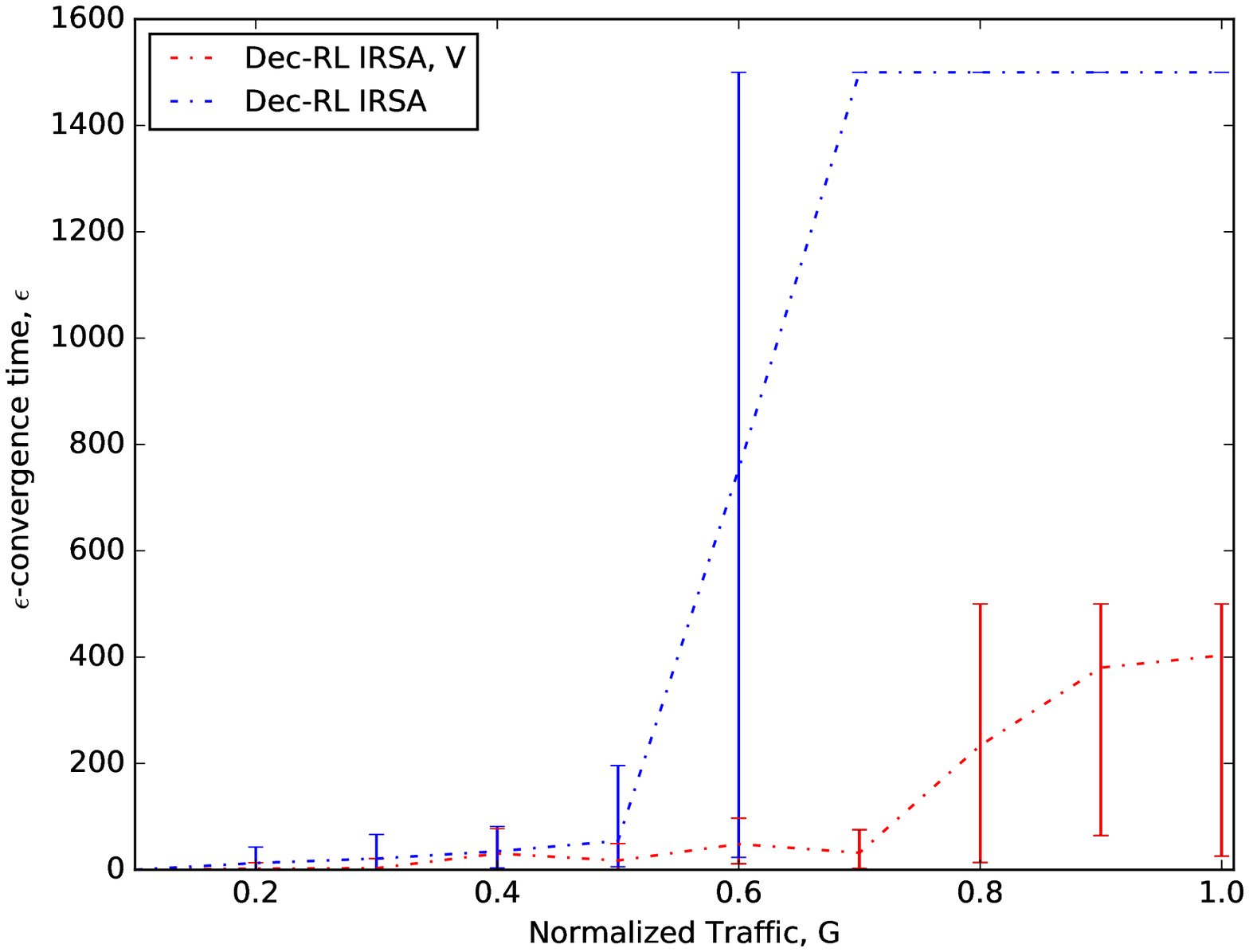}
\caption{Statistical comparison of $\epsilon$-convergence times for vanilla Dec-RL IRSA and Dec-RL IRSA using virtual experience, with $\epsilon =0.5$.}
\vspace{-0.8cm}
\label{fig:statistical_virtual}
\end{minipage} \hfill
\begin{minipage}[t]{.45\textwidth}
\includegraphics[scale = 0.4]{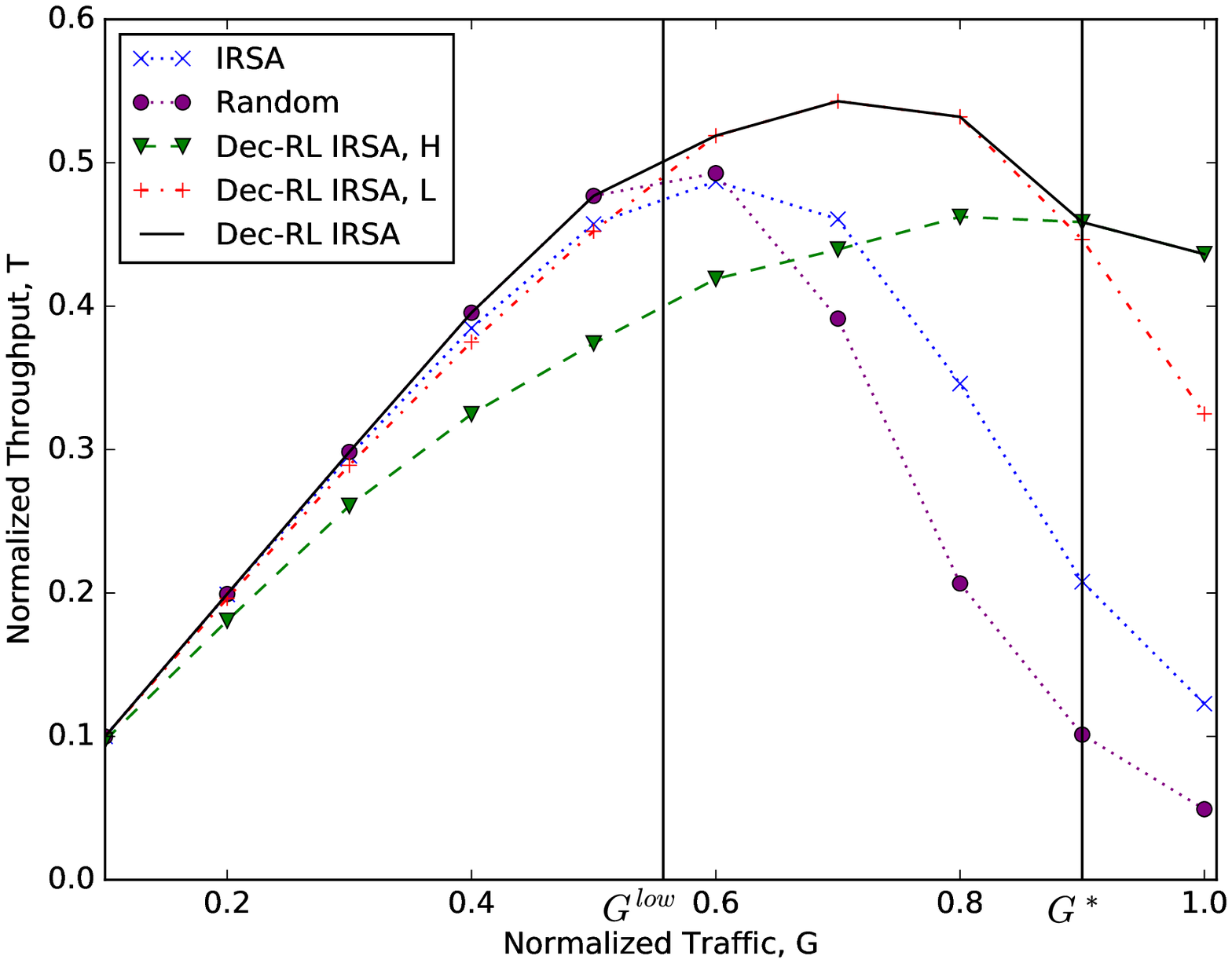}
\caption{Performance comparison Dec-RL IRSA, a random strategy, Dec-RL IRSA, L optimized for low $G$ and Dec-RL IRSA, H optimized for high values of $G$, and the vanilla IRSA scheme. The solid line indicates the throughput when considering the optimal scheme for each value of $G$.}
\label{fig:waterfall}
\end{minipage}
\end{figure}

\section{Conclusion}\label{sec:discussion}

We have examined the problem of decentralized MAC design through a reinforcement learning perspective and proved that learning transmission strategies can be beneficial even under the assumption of sensor nodes' independent learning. Our experiments suggest that the ``waterfall effect" of the problem, common in social games where agents compete for common resources, leads to different learning dynamics that demand adaptive solutions. Our method's superiority is manifested especially in high channel loads, where the need for adaptivity is more eminent and agents benefit from short-sightedness and increased exploration, which implicitly ensures better coordination. From the results we can conclude that in  order to make learning tractable for online application scenarios, it is essential to achieve fast convergence. We observed that even maintaining a small observation space, by restricting the history window to 2, the performance remains satisfactory. Finally, the results show that we significantly reduced convergence time by introducing virtual experience into learning. 


\bibliographystyle{IEEEtran}

\end{document}